\newcommand{\prooflink}[1]{\hypersetup{linkcolor=lipicsBulletGray}\hyperref[#1]{$\blacktriangledown$}}
\newcommand{\statlink}[1]{\hypersetup{linkcolor=lipicsYellow}\hyperref[#1]{$\blacktriangle$}}
\newcommand{\mylink}{}
\newcommand{\cprooflink}[1]{\hypersetup{linkcolor=lipicsBulletGray}\hyperref[#1]{$\triangledown$}}
\newcommand{\cstatlink}[1]{\hypersetup{linkcolor=lipicsYellow}\hyperref[#1]{$\vartriangle$}}
\newcommand{\OctR}{\textsc{Octilinear Realizability}\xspace}
\newcommand{\OctC}{\textsc{Octilinear Compaction}\xspace}
\newcommand{\OrtR}{\textsc{Orthogonal Realizability}\xspace}
\newcommand{\OrtC}{\textsc{Orthogonal Compaction}\xspace}
\begin{document}
\titlerunning{Compaction and Realizability of Almost Convex Octilinear Representations}
\title{On Compaction and Realizability of Almost Convex Octilinear Representations\thanks{This work started at the Bertinoro Workshop on Graph Drawing 2024. We thank the organizers for organizing this event and the other participants for useful discussions.}} 


\author{Henry Förster\inst{1,2}\orcidID{0000-0002-1441-4189}
\and
Giacomo Ortali\inst{3}\orcidID{0000-0002-4481-698X}
\and Lena Schlipf\inst{2}\orcidID{0000-0001-7043-1867}
}

\institute{School of Computation, Information and Technology, TU Munich, Heilbronn\\
\email{henry.foerster@tum.de} \and
Wilhelm-Schickard-Institut f\"ur Informatik, Universit\"at T\"ubingen\\\email{lena.schlipf@uni-tuebingen.de} \and Department of Engineering, University of Perugia\\\email{giacomo.ortali@unipg.it}}



\authorrunning{Henry Förster, Giacomo Ortali and Lena Schlipf} 

\maketitle

\begin{abstract}
      Octilinear graph drawings are a standard  paradigm extending the orthogonal graph drawing style by two additional slopes ($\pm 1$). We are interested in two constrained drawing problems where the input specifies a so-called representation, that is: a planar embedding; the angles occurring between adjacent edges; the bends along each edge.  In \textsc{Orthogonal Realizability} one is asked to compute any orthogonal drawing satisfying the constraints, while in \textsc{Orthogonal Compaction}  the goal is to find such a drawing using minimum area.
      While \OrtR can be solved in linear time, \OrtC is NP-hard even if the graph is a cycle. In contrast, already \textsc{Octilinear Realizability}  is known to be NP-hard. 
      
      In this paper we investigate \OctR  and  \textsc{Octilinear Compaction}  problems. We prove that \OctR remains NP-hard if at most one face is not convex or if each interior face has at most $8$ reflex corners. We also strengthen the hardness proof of \OctC, showing that \OctC does not admit a PTAS even if the representation has no reflex corner except at most $4$ incident to the external face. On the positive side, we prove that \OctR is FPT in the number of reflex corners and for \OctC we describe an XP algorithm on the number of edges represented with a $\pm 1$ slope segment (i.e, the diagonals), again for the case where the representation has no reflex corner except at most $4$ incident to the external face.
      \keywords{octilinear graph drawing \and parameterized complexity\and approximation } 
\end{abstract}

\section{Introduction}
\label{se:intro}
Octilinear (graph) drawings have been the standard visualization paradigm~\cite{DBLP:journals/vlc/HongMN06,DBLP:journals/tvcg/NollenburgW11,DBLP:journals/tvcg/StottRMW11} used in metro maps since the first map of the London Underground was designed by Henry Beck in 1933~\cite{beck}\footnote{See also~\cite{beck2} for Beck's visualization.}. It extends the popular orthogonal (graph) drawing style by two additional slopes ($\pm 1$) which can be useful in various diagramming applications such as UML~\cite{uml} or BPMN~\cite{bpmn}. Similar to studies on orthogonal drawings, research efforts on octilinear drawings have aimed to compute drawings with few bends per edge in small area~\cite{DBLP:journals/jgaa/BekosG0015,DBLP:journals/jgaa/BekosKK17}.

Due to the similarities to the orthogonal drawing model, one may be tempted to adopt techniques that have been successfully employed in the context of orthogonal drawing. However, many problems related to octilinear drawing turn out to be more difficult than the corresponding ones for the orthogonal model. For instance, it is NP-hard to compute an octilinear drawing with the fewest total number of bends if the embedding is fixed~\cite{octiNP} while the corresponding problem for orthogonal drawings is polynomial time solvable~\cite{DBLP:journals/siamcomp/Tamassia87}. 

We are interested in two constrained drawing problems where the input specifies not only a planar embedding but a so-called representation, that is, a planar embedding and the angles occurring between adjacent edges, the bends along each edge, and the angles each bend forms. In the  \textsc{Realizability} problem one is then asked to compute any drawing satisfying the constraints whereas in the \textsc{Compaction} problem the goal is to find such a drawing using minimum area. \textsc{Orthogonal Realizability} can be trivially solved in polynomial time~\cite{DBLP:books/ph/BattistaETT99,DBLP:journals/siamcomp/Tamassia87}, but \textsc{Orthogonal Compaction}  is NP-hard~\cite{Patrignani01} even if the graph is a cycle~\cite{DBLP:journals/comgeo/EvansFKSSW22}.  In contrast, already  \textsc{Octilinear Realizability} is known to be NP-hard~\cite{DBLP:journals/algorithmica/BekosFK19}. The NP-hardness of  \OrtC implies the NP-hardness of \textsc{Octilinear Compaction}.

\OrtC is efficiently solvable under certain constraints~\cite{DBLP:journals/comgeo/BridgemanBDLTV00,DBLP:conf/ipco/KlauM99}. In the past few years, these concepts have resurfaced in the graph drawing community~\cite{DBLP:journals/jgaa/BekosBBDGKPR22,DBLP:conf/grapp/Esser19a} culminating in an FPT algorithm~\cite{DBLP:conf/sofsem/DidimoGKLWZ23} parameterized by the number of so-called \emph{kitty corners}. Intuitively, {kitty corners} are pairs of reflex corners that ``point'' towards each other and are a measure for non-convexity. More in general, recently various parametrized algorithms have been presented for other variants of orthogonal drawings (see for example \cite{DBLP:conf/sofsem/DidimoGKLWZ23,DBLP:conf/isaac/Didimo0LOP23,DBLP:journals/algorithmica/GiacomoDLMO24}), as far as we know no parametrized approach has been investigated for octilinear drawings.

The rest of the paper is structured as follows. In Section~\ref{sec:prelim} we give preliminary definitions.  In Section \ref{se:hard} we strengthen the hardness results known for both \OctR  and \OctC problems. In particular, in Section~\ref{sec:inapprox} we prove that \OctC  admits no $\frac{9}{4}$-approximation even when the representation has no reflex corner except at most $4$ incident to the external face; in Section \ref{sec:paranp} we show that \OctR is para-NP-hard\footnote{Para-NP-hardness means that the problem remains NP-hard for a bounded value of a parameter.} when parameterized by the  number of faces with reflex corners or by the maximum number of reflex corners per face.  In  Section~\ref{se:param} we present two parametrized algorithms: in Section \ref{sec:fpt} we show that \OctR is FPT parameterized by the number of reflex corners and in Section \ref{sec:xp} we show that \OctC admits an XP algorithm parametrized by the number of edges of the representation represented with segments of slope $\pm 1$ (i.e., the diagonals), again in the case where the representation has no reflex corner (except at most $4$ incident to the external face).

\section{Preliminaries}
\label{sec:prelim}

In an octilinear drawing of a graph $G = (V, E)$ each vertex $v \in V$ is represented by a point in $\mathbb{R}^2$ and each edge $e \in E$ is drawn as a sequence of horizontal, vertical and diagonal (with slope $\pm 1$) segments. In this paper we only investigate simple planar drawings, hence from now on we omit the word ``simple'' and the word ``planar''.  Observe that a necessary condition for a vertex to admit an octilinear drawing is having maximum degree at most 8. 

Two important equivalence classes of octilinear drawings are \emph{embeddings} and \emph{octilinear representations}. An embedding contains all (octilinear) graph drawings with the same set of faces and the same external face. An \emph{(abstract) octilinear representation} contains all octilinear drawings with the same embedding that have the same angles between consecutive edges around each vertex and the same sequence of bends along an oriented version of each edge (e.g., from lower index to higher index endpoint).  
If a drawing fullfills these constraints, we say that it \emph{realizes} the representation. Since bends are predetermined, we can replace each bend by a vertex of degree $2$ and assume our representations have no bends. We now define formally \OctR and \textsc{Compaction}. 

\smallskip \noindent
\OctR: \emph{Given an octilinear representation $\mathcal{R}$, decide if $\mathcal{R} \neq \emptyset$.}

\smallskip \noindent
\OctC: \emph{Given a \emph{realizable} octilinear representation $\mathcal{R}$, report a drawing $\Gamma \in \mathcal{R}$ on the integer grid such that the area of $\Gamma$, i.e., the area of the smallest axis-parallel rectangle containing $\Gamma$, is at most the area of $\Gamma'$ for each $\Gamma' \in \mathcal{R}$ on the integer grid.}

\smallskip 
In the rest of the paper, unless otherwise stated, we denote by $G$ a graph and by $\mathcal{R}$ an octilinear representation of $G$. For the evaluation of the area in the \OctC problem, we have to make the use of the grid comparable. Observe that we are employing a standard model in graph drawing, that is, we require that vertices must be placed at integer coordinates. 
Moreover, also observe that for the input of \OctC, we require that $\mathcal{R} \neq \emptyset$ so to separate the computational complexity of the actual compaction from the NP-hardness of the necessary realizability. Note that even in this restricted setting, we show APX-hardness in Section~\ref{sec:inapprox}. 
We aim to apply the notion of limited non-convexity to investigate the parameterized complexity of \OctR and \OctC problems. Our results concern a set of more relaxed parameters that are related to the number of reflex corners. We consider four parameters: \begin{enumerate*}
    \item $\omega$ is the total number of reflex corners in the entire representation;
    \item $\varphi$ is the maximum number of faces  that contain at least one reflex corner;
    \item $\kappa$ is the maximum number of reflex corners per face; 
    \item $\delta$ is the number \emph{diagonals} (edges having slope $\pm 1$).
\end{enumerate*}

\section{Hardness Results}
\label{se:hard}

\subsection{Review of the NP-hardness Construction of Bekos et al.~\cite{DBLP:journals/algorithmica/BekosFK19}}
\label{sec:nphardnessliterature}

In both Sections \ref{sec:inapprox} and \ref{sec:paranp}, we extend the NP-hardness reduction from 3-SAT to \OctR and \textsc{Compaction} by Bekos et al.~\cite{DBLP:journals/algorithmica/BekosFK19} to obtain new results. Here, we briefly describe the approach of~\cite{DBLP:journals/algorithmica/BekosFK19}. As a central concept, information is encoded in the length $\ell_u$ of specific edges of the drawing. We first describe the proof for the easier case where $\ell_u=1$.
For reason of space, the approach to generalize to any value of $\ell_u$ 
will be only sketched in this section and described in details in Appendix \ref{app:hard}.

\begin{figure}[tb]
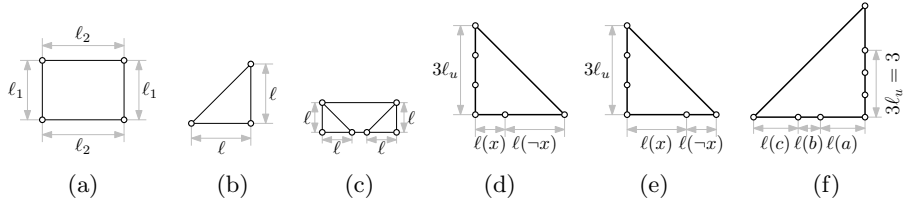

    \centering
    \begin{subfigure}[b]{0.18\textwidth}
        \centering
        \includegraphics[page=1,scale=0.7]{octilinear-np.pdf}
        \caption{}
        \label{fig:beyond-ortho-np:rectangle}
    \end{subfigure}
    \hfil
    \begin{subfigure}[b]{0.13\textwidth}
        \centering
        \includegraphics[page=3,scale=0.7]{octilinear-np.pdf}
        \caption{}
        \label{fig:beyond-ortho-np:triangle}
    \end{subfigure}
    \hfil
    \begin{subfigure}[b]{0.13\textwidth}
        \centering
        \includegraphics[page=4,scale=0.7]{octilinear-np.pdf}
        \caption{}
        \label{fig:beyond-ortho-np:copy}
    \end{subfigure}
    \hfil
     \begin{subfigure}[b]{0.16\textwidth}
        \centering
        \includegraphics[page=6,scale=0.7]{octilinear-np.pdf}
        \caption{}
        \label{fig:beyond-ortho-np:true}
    \end{subfigure}
    \hfil
    \begin{subfigure}[b]{0.16\textwidth}
        \centering
        \includegraphics[page=7,scale=0.7]{octilinear-np.pdf}
        \caption{}
        \label{fig:beyond-ortho-np:false}
    \end{subfigure}
    \begin{subfigure}[b]{0.2\textwidth}
        \centering
        \includegraphics[page=9,scale=0.7]{octilinear-np.pdf}
        \caption{}
        \label{fig:beyond-ortho-np:constantUnit}
    \end{subfigure}
      \caption{Gadgets used in the reduction from \textsc{$3$-SAT} to \textsc{Octilinear Realizability~\cite{DBLP:journals/algorithmica/BekosFK19}} for $\ell_u=1$: (a)~Propagation, (b)~rerouting, (c)~copy, (d)--(e)~variable and (f)~clause gadget. (d) and (e)~show $x=\top$ and $x=\bot$, respectivel.y
    }
    \label{fig:beyond-ortho-np:routing2}
\end{figure}

\smallskip
\noindent 
\textbf{Case $\mathbf{\ell_u=1}$: The gadgets.} There are three types of gadgets besides the clause and variable gadgets in~\cite{DBLP:journals/algorithmica/BekosFK19}. The \emph{propagation gadget} is a rectangular face, where the information can be propagated from either side to its opposite side. See Fig.~\ref{fig:beyond-ortho-np:rectangle}. The \emph{rerouting gadget} is a triangular face which allows to propagate from a vertical to a horizontal edge or vice-versa. See Fig.~\ref{fig:beyond-ortho-np:triangle}. The \emph{copy gadget} are two triangular faces connected with two straight-line edges. This gadget copies the length of an edge to three other edges letting information related to a variable reach all the clauses that contain the variable. See Fig.~\ref{fig:beyond-ortho-np:copy}. 

It remains to discuss how variables and clauses are encoded. The variable gadget for a variable $x$ consists of a single triangular shaped face as shown in Figs.~\ref{fig:beyond-ortho-np:true} and~\ref{fig:beyond-ortho-np:false}, so that the left side of the face is formed by three edges of length $\ell_u$ while the bottom side of the face consists of two edges. One of those is representing the literal $x$ while the other edge represents the literal $\neg x$. As a result, it holds that $\ell(x)+\ell(\neg x) = 3 \ell_u$.  It is easy to see that $\ell(x), \ell(\neg x) \in \{1,2\}$ on the integer grid. We assume that the literal whose corresponding edge has length $2$ to be true and check with the clause gadget in Fig.~\ref{fig:beyond-ortho-np:constantUnit}. For clause $(a \lor b \lor c)$ at least one among the three literals $a$, $b$, and $c$ is true, since the right side of the face has height more than three times the unit edge length $\ell_u=1$. See Fig.~\ref{fig:beyond-ortho-np:constantUnit}. See Fig.~\ref{fig:hard} for an example of reduction. 

\begin{figure}[tb]
    \centering
    \includegraphics[width=0.9\textwidth,page=5]{}
    \caption{Example of the reduction descibed in~\cite{DBLP:journals/algorithmica/BekosFK19} with formula $(x_1\lor x_2 \lor x_3) \land (\neg x_1 \lor \neg x_2 \lor x_3) \land (\neg x_1 \lor x_2 \lor x_3)$ when $\ell_u= 1$ and the satisfying assignment $x_1=\bot$, $x_2 = x_3 = \top$ with our modifications (red and blue) as   described in the proof of Theorem~\ref{thm:approximation}. A clean version without our adjustments is in Appendix~\ref{app:hard}.}
    \label{fig:hard}
\end{figure}

\textbf{Case $\mathbf{\ell_u=1}$: Routing information.}  Refer to Fig.~\ref{fig:hard}.
The first $3n$ copy gadgets at the bottom side connect to the $n$ variable gadgets. The variable gadget of variable $x_i$ appears above the variable gadget of variable $x_{i+1}$. The output literals of each variable are rerouted using rerouting gadgets so that the variable length is propagated towards the right via a series of propagation and copy gadgets, to which we refer as a \emph{literal path} (shaded with three different colors in Fig.~\ref{fig:hard} and all the other figures). Note that if $\ell_u=1$, exactly one literal of each variable is true. For each clause there are seven copy gadgets at the bottom side of the drawing and a clause gadget; the last three copies of the unit length of the associated copy gadgets are propagated via propagation and rerouting gadgets to the clause gadget. The clause gadget receives the information from the corresponding literal paths via the right copied length of a copy gadget on the literal path. We bound the upper and right side of the drawing by inserting a vertex at the top right corner. We connect this vertex with the topmost copy gadget horizontally and a path through the ends of each variable path and the rightmost copy gadget vertically. 
See  Appendix~\ref{app:hard} for the general case $\ell_u\ge1$.


\subsection{Approximability of  Octilinear Compaction}
\label{sec:inapprox}

We investigate the parameterized complexity of \OctC  for $\omega = 4$ -- the outer face   contains at least $3$ reflex corners in any octilinear drawing.

\begin{restatable}{theorem}{inapprox}\label{thm:approximation}
There is no $\frac{9}{4}$-approximation for \OctC  even if $\omega = 4$ unless P=NP.
\end{restatable}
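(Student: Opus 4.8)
The plan is to turn the realizability reduction of Bekos et al.~\cite{DBLP:journals/algorithmica/BekosFK19} into a \emph{gap} reduction for \OctC: from a $3$-SAT formula $\phi$ I build an octilinear representation $\mathcal{R}_\phi$ that is \emph{always} realizable (so that it is a legal \OctC instance) but whose minimum drawing area is small exactly when $\phi$ is satisfiable. First I would observe that every interior face of the construction -- the rectangular propagation faces and the triangular rerouting, copy, variable, and clause faces -- is convex, so reflex corners can only occur along the outer boundary. I then close the boundary into an axis-parallel rectangle by inserting the single top-right vertex described in Section~\ref{sec:nphardnessliterature} (the red and blue edges in Fig.~\ref{fig:hard}); its four corners are the only reflex corners of the drawing, giving $\omega=4$ with all of them incident to the external face. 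To make $\mathcal{R}_\phi$ realizable irrespective of satisfiability, I build slack into the clause faces so that a drawing in which the common unit length $\ell_u$ (the length of the variable-gadget side edges) equals $2$ and every literal edge has length $3$ exists for \emph{any} $\phi$; this certifies $\mathcal{R}_\phi \neq \emptyset$.

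Next I would pin down the area gap. On the integer grid the unit length $\ell_u$ is a positive integer, the variable faces enforce $\ell(x)+\ell(\neg x)=3\ell_u$, and each clause face forces the sum of its three literal lengths to exceed $3\ell_u$. If $\phi$ is satisfiable I set $\ell_u=1$, assign each true literal length $2$ and each false literal length $1$; then every clause sum is at least $4>3$ and I obtain a realizing drawing of some area $A_s$. If $\phi$ is \emph{un}satisfiable, then $\ell_u=1$ is infeasible, because any integer solution with $\ell_u=1$ has every $\ell(x),\ell(\neg x)\in\{1,2\}$ and hence encodes a satisfying Boolean assignment; consequently every realizing drawing must use $\ell_u\ge 2$.

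It remains to convert $\ell_u\ge 2$ into an area bound of $\frac{9}{4}A_s$. Both dimensions of the bounding box contain contributions that grow linearly with $\ell_u$: the stacked variable gadgets drive the height, while the literal paths and copy gadgets drive the width. By padding $\phi$ with dummy, trivially satisfiable variables and clauses I can make these $\ell_u$-dependent contributions dominate the fixed part of each dimension, so that passing from $\ell_u=1$ to $\ell_u\ge 2$ enlarges each of the width and the height by a factor of at least $\frac{3}{2}$, and hence the area by a factor of at least $\frac{9}{4}$. Thus $\mathrm{OPT}\le A_s$ for satisfiable $\phi$ while $\mathrm{OPT}\ge\frac{9}{4}A_s$ otherwise, and pushing the padding slightly further makes the latter inequality strict. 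A $\frac{9}{4}$-approximation therefore outputs area at most $\frac{9}{4}A_s$ exactly on the satisfiable instances, so thresholding its output at $\frac{9}{4}A_s$ decides $3$-SAT in polynomial time, forcing $\mathrm{P}=\mathrm{NP}$.

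The hard part will be the last step: rigorously proving that $\ell_u\ge 2$ forces \emph{both} bounding-box dimensions to grow, i.e.\ that the surplus length mandated by an unsatisfiable formula cannot be hidden by a clever non-uniform length assignment without enlarging the bounding box in both directions. This requires a careful analysis of how edge lengths propagate through the network of propagation and copy gadgets -- essentially arguing that the side edges of all variable gadgets are tied to a common integral unit length and that the width- and height-spanning chains each inherit this length -- so that the factor-$\frac{3}{2}$ lower bound holds simultaneously for the two dimensions of every optimal drawing.
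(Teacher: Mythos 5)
Your overall strategy coincides with the paper's: reduce from $3$-SAT via the construction of Bekos et al.~\cite{DBLP:journals/algorithmica/BekosFK19} with $\ell_u=1$, guarantee realizability of every produced instance through the always-feasible assignment $\ell_u=2$, $\ell(x)=\ell(\neg x)=3$, observe that on the integer grid satisfiability is equivalent to the existence of a drawing with $\ell_u=1$, and then argue that a $\frac{9}{4}$-approximation would separate the two cases. However, there is a genuine gap exactly where you flag ``the hard part'': you never establish that $\ell_u\ge 2$ forces \emph{both} bounding-box dimensions to grow by a factor of at least $\frac{3}{2}$, and your proposed mechanism --- padding $\phi$ with dummy variables and clauses so that the $\ell_u$-dependent contributions ``dominate the fixed part'' --- does not obviously deliver this. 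Padding increases the $\ell_u$-independent contributions (unit-length connector edges, the constant-size parts of rerouting and copy gadgets) \emph{proportionally} with the $\ell_u$-dependent ones, so the ratio of minimum extents $h(2)/h(1)=(2\alpha+\beta)/(\alpha+\beta)$ converges to the \emph{per-gadget} ratio, and exceeds $\frac{3}{2}$ only if each padded unit contributes at least as much to the $\ell_u$-coefficient $\alpha$ as to the fixed part $\beta$. You give no such per-gadget accounting, and it is delicate: in the paper's own construction each copy gadget contributes $2\ell_u$ to $\alpha$ and $2$ (gadget slack plus connector) to $\beta$, i.e.\ the two contributions are exactly balanced, and the strict inequality is rescued only by a single trailing $+\ell_u$ term. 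So the factor $\frac{3}{2}$ per dimension cannot be waved through by ``enough padding''.

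The paper resolves this differently: instead of relying on the variable-gadget stack and padding, it lines the \emph{left} boundary of the outer face with an explicit chain of $3n+5$ copy gadgets (the red gadgets in Fig.~\ref{fig:hard}), complementing the $3n+7m$ copy gadgets on the bottom boundary. These boundary chains rigidly pin both minimum extents as explicit linear functions of $\ell_u$, namely $h(\ell_u)=(2\ell_u+2)(3n+5)+\ell_u$ and $w(\ell_u)=(2\ell_u+2)(3n+7m)+\ell_u$, after which the gap is a one-line verification: $\frac{9}{4}h(1)w(1)=(18n+31.5)(18n+42m+1.5)<(18n+32)(18n+42m+2)=h(2)w(2)$. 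A secondary inaccuracy in your write-up: it is not true that all interior faces of the raw construction are convex --- the gadget faces (rectangles, triangles) are, but the filler faces between gadgets contain reflex corners. The paper removes these by shooting a horizontal or vertical ray from each interior reflex corner and subdividing the first segment hit (the blue edges in Fig.~\ref{fig:hard}); these blue edges are not, as you state, part of closing the outer boundary. Without both ingredients --- the interior convexification and the boundary copy-gadget chains with their exact length accounting --- your argument does not yet yield $\omega=4$ nor the claimed $\frac{9}{4}$ gap.
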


\begin{proof}[Sketch]
    We adapt the NP-hardness  construction by Bekos et al.~\cite{DBLP:journals/algorithmica/BekosFK19} for the special case $\ell_u = 1$. Namely, we add a sequence of copy gadgets to the left side of the construction (red shaded gadgets in Fig.~\ref{fig:hard}) which makes the area of the drawing dependent on $\ell_u$. In addition, we add some extra (subdivision) vertices and edges to make all faces convex (see blue parts  in Fig.~\ref{fig:hard}). See Appendix~\ref{app:inapprox} for details and computation of the approximation ratio.$\hfill\qed$
\end{proof}


\begin{corollary}\label{cor:noPTAS}
 Unless P=NP, there is no PTAS for \OctC.
\end{corollary}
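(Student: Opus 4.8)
The plan is to derive the corollary directly from Theorem~\ref{thm:approximation} via the standard observation that a constant-factor inapproximability threshold rules out a PTAS. I would argue by contradiction: assume that \OctC admits a PTAS, i.e., a family of algorithms that, for every fixed $\epsilon > 0$, computes a $(1+\epsilon)$-approximation in time polynomial in the input size (for fixed $\epsilon$).

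First I would fix any constant $\epsilon$ with $1 + \epsilon \le \frac{9}{4}$; for concreteness $\epsilon = 1$ already works (yielding a $2$-approximation), and $\epsilon = \frac{5}{4}$ is the extreme admissible choice. Instantiating the PTAS with this fixed $\epsilon$ produces a single polynomial-time algorithm whose output is within a factor $1 + \epsilon \le \frac{9}{4}$ of the optimum on every instance. For the minimization objective at hand, a solution of value at most $(1+\epsilon)\,\mathrm{OPT} \le \frac{9}{4}\,\mathrm{OPT}$ is in particular within factor $\frac{9}{4}$ of the optimum, so this algorithm is a polynomial-time $\frac{9}{4}$-approximation for \OctC.

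The only point requiring a word of care is that the lower bound in Theorem~\ref{thm:approximation} is stated for the restricted class of instances with $\omega = 4$. I would note that these are themselves valid \OctC instances (realizable octilinear representations with $\mathcal{R}\neq\emptyset$), so the hypothesized PTAS applies to them unchanged; the resulting $\frac{9}{4}$-approximation therefore works on precisely the class for which inapproximability was established. By Theorem~\ref{thm:approximation} the existence of such an algorithm forces P = NP, contradicting the standing assumption $P \neq NP$, and hence no PTAS can exist unless P = NP.

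Since the corollary is a direct specialization of the constant-factor bound, I expect no genuine obstacle: all the technical weight is already carried by Theorem~\ref{thm:approximation}, and this proof merely records the immediate implication that a fixed inapproximability threshold strictly above $1$ precludes approximation ratios arbitrarily close to $1$.
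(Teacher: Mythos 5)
Your proof is correct and matches the paper's (implicit) reasoning: the paper states the corollary without proof precisely because it follows immediately from Theorem~\ref{thm:approximation}, exactly as you argue---a PTAS instantiated with any fixed $\epsilon \le \frac{5}{4}$ yields a polynomial-time $\frac{9}{4}$-approximation, contradicting the theorem. Your added remark that the hardness instances with $\omega=4$ are valid inputs for the hypothesized PTAS is a fine (if routine) point of care and raises no issue.
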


\subsection{Para-NP-Hardness of \OctR}
\label{sec:paranp}

We shift our attention to the less restrictive parameters $\varphi$, the number of non-convex faces, and $\kappa$, the maximum number of reflex corners per face, and we prove para-NP-hardness even for  \OctR:

\begin{restatable}{theorem}{paraNP}
\label{thm:paraNP}
\renewcommand{\mylink}{\statlink{thm:paraNP}\prooflink{pparaNP}}
\OctR remains NP-hard for $\varphi=1$. Also, it remains NP-hard for $\kappa=8$.
\end{restatable}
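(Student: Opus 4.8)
The plan is to reuse the $3$-SAT reduction of Bekos et al.~\cite{DBLP:journals/algorithmica/BekosFK19} in the realizability-only case $\ell_u=1$, for which the produced representation $\mathcal{R}$ satisfies $\mathcal{R}\neq\emptyset$ if and only if the formula is satisfiable, and then to postprocess $\mathcal{R}$ by inserting subdivision vertices and additional straight octilinear segments (``chords'') in order to control where reflex corners may appear. The invariant I would maintain is that each such insertion preserves the equivalence ``$\mathcal{R}\neq\emptyset$ iff the formula is satisfiable''. One direction is immediate: a chord is inserted strictly inside a single bounded face $f$ and is incident only to two vertices on $\partial f$; it carries fixed octilinear angles at its endpoints but its length stays a free variable, so its insertion only \emph{adds} local realizability constraints. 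Hence a non-realizable $\mathcal{R}$ stays non-realizable. For the converse I would argue, exactly as in the convexification already used for Theorem~\ref{thm:approximation} (the blue parts in Fig.~\ref{fig:hard}), that whenever $f$ is realizable the chord can be drawn as a straight segment of the prescribed slope inside the realized region, so realizability is preserved.

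For the case $\varphi=1$, the goal is to eliminate reflex corners from every \emph{bounded} face. Each gadget face (propagation, rerouting, copy, and variable gadgets) is a rectangle or a triangle and is therefore already convex, so the only non-convex bounded faces are the clause-gadget faces and the ``background'' faces between gadgets. I would split each reflex corner by a chord of slope $\pm1$ or an axis-parallel chord, turning a $225^\circ$, $270^\circ$, or $315^\circ$ interior angle into angles of at most $180^\circ$; since octilinear reflex angles are multiples of $45^\circ$, a bounded number of chords per corner always suffices. After convexifying every bounded face, the only face still carrying reflex corners is the outer face, which is non-convex for any bounded octilinear drawing; this gives $\varphi=1$.

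For the case $\kappa=8$ I would \emph{not} convexify, since here it suffices to bound the number of reflex corners in each interior face by $8$, and a construction with many non-convex interior faces is exactly what makes this parameter incomparable to $\varphi$. Because every gadget is of constant size, each gadget face has a constant number of reflex corners, and I would verify that the clause gadget — whose boundary mixes horizontal, vertical, and $\pm1$-slope segments to encode the ``at least one literal is true'' inequality — attains the value $8$, which fixes the claimed bound. The only faces whose reflex-corner count can grow with the instance are the large background faces; these I would repeatedly cut with octilinear chords until every interior face has at most $8$ reflex corners. By the invariant above, each cut preserves realizability, so the final representation satisfies $\kappa=8$ and remains realizable iff the formula is satisfiable.

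I expect the main obstacle to be not the hardness-preserving bookkeeping (which is one-sided and easy in the non-realizable direction) but the geometric case analysis ensuring that the inserted chords can simultaneously (i)~be realized with admissible octilinear slopes inside every face we subdivide, and (ii)~reduce the local reflex count as intended without ever over-constraining a satisfiable instance; here visibility of the chosen subdivision vertex from the reflex vertex inside the concrete realized polygon is the delicate point, which I would settle by reusing the routing of Theorem~\ref{thm:approximation}. The remaining construction-specific work is to pin down precisely which face forces the parameter to its stated value — the outer face for $\varphi=1$ and a clause-gadget face with exactly $8$ reflex corners for $\kappa=8$ — rather than to rely on a generic counting bound.
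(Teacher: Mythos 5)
Your proposal has a fatal flaw at its foundation: the reduction you start from does not work for \OctR. An octilinear representation prescribes only the embedding and the angles, never edge lengths, so ``the case $\ell_u=1$'' is not something you can enforce in a realizability instance. Without the parity gadgets of Bekos et al.~\cite{DBLP:journals/algorithmica/BekosFK19} (Fig.~\ref{fig:beyond-ortho-np:parityOctilinear}), the construction built from propagation, rerouting, copy, variable and clause gadgets alone is \emph{always} realizable: a realization may simply choose $\ell_u=2$ and $\ell(x)=\ell(\neg x)=3$ for every variable, which satisfies every clause-gadget inequality $\ell(a)+\ell(b)+\ell(c)>3\ell_u$ regardless of the formula. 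This is exactly why the paper can use the $\ell_u=1$ variant only for \OctC (Theorem~\ref{thm:approximation}), where the area objective forces the unit length down, while the proof of Theorem~\ref{thm:paraNP} must start from the general $\ell_u\ge 1$ construction in which the parity gadgets force $\ell(x),\ell(\neg x)$ into the ranges $]0,1.084\ell_u[\,\cup\,]1.916\ell_u,3\ell_u[$ that encode truth values. So the invariant you want to maintain (``$\mathcal{R}\neq\emptyset$ iff the formula is satisfiable'') is false already for your base instance, and no amount of chord insertion repairs that, since your insertions are designed to preserve realizability.

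The missing gadget is also precisely what makes both parts of the theorem nontrivial. The parity gadgets carry reflex corners that cannot be removed by convexification chords without destroying their function (their two blocks must be free to slide past each other, which is what forbids intermediate lengths). The paper's proof therefore eliminates, via subdivision-plus-segment insertions, all reflex corners \emph{except} those of the parity gadgets, which by construction lie on the outer face --- giving $\varphi=1$ --- and then, for $\kappa=8$, walls off each parity gadget into its own face so that every face contains at most $8$ reflex corners. Your proposal instead attributes the value $8$ to the clause gadget and proposes to fully convexify all bounded faces; with the correct base construction the former is a misattribution (the clause gadget's diagonal lies on the outer face and is not the bottleneck), and the latter is impossible for the parity-gadget faces. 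Your one-sided ``chords only add constraints'' observation and the visibility-based insertion of subdivision vertices are sound and do match the paper's technique for the removable reflex corners; the gap is that the reduction you apply them to is not NP-hard for realizability in the first place.
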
 

    
    
\begin{proof}[Sketch]
    We adapt the NP-hardness  proof by Bekos et al.~\cite{DBLP:journals/algorithmica/BekosFK19} for the case $\ell_u\ge 1$. For $\varphi=1$, we achieve that only the outer face contains reflex corners. Then, for $\kappa=8$, we subdivide the outer face so that each gadget requiring reflex corners is contained in its own face. See Appendix~\ref{app:paranp} for details. 
    $\hfill\qed$
\end{proof}


\section{Parametrized Results}
\label{se:param}
In this section we show that \OctR and \OctC can be efficiently solved when there are either few reflex angles and diagonal edges, respectively. Namely, in  Sections \ref{sec:fpt} we prove that there exists an FPT algorithm on parameter $\omega$ for  \OctR and in Section  \ref{sec:xp} we describe an XP algorithm on parameter~$\delta$ for  \OctC problem. Before that, in Section \ref{se:flow}, we describe a flow procedure that will be used in the other two sections to prove the respective results.

\subsection{Flow algorithm for the convex case.}

Assume that in $\mathcal{R}$ all inner faces are convex and that the outer face contains no inflex angle (that is, the polygon bounding the outer face is convex as well). In this case, a realizing drawing exists if and only if for each face the sum of widths of the edges forming the top boundary is equal to the sum of the widths of the edges forming the bottom boundary and the same holds for the left and the right boundary.  
The reason for that is that one can use the obtained edge lengths to construct the drawing after fixing an arbitrary vertex to an arbitrary point in the plane. In any such solution, for each face, the edges forming the facial cycle are indeed realized as a crossing-free convex polygon. Thus, the resulting drawing correctly realizes the given planar embedding and it is therefore planar.

We can model this behavior using two auxiliary flow networks similar to techniques used in orthogonal graph drawing~\cite{DBLP:books/ph/BattistaETT99,DBLP:journals/siamcomp/Tamassia87}. Namely, $G^\rightarrow$ (blue graph in Fig.~\ref{fig:convex_representation}) contains a node for each internal face and two nodes $s$ and $t$ for the outer face. In addition, arc $(f_1,f_2)$ exists if there is an edge $e$ in $\mathcal{R}$ so that $f_1$ occurs to the left of $e$ and $f_2$ to the right (for the outer face all outgoing and incoming edges are attached to $s$ and $t$, resp.). It is easy to see that any valid flow from $s$ to $t$ with positive value along each arc describes an assignment of horizontal lengths to the edges of $\mathcal{R}$ satisfying the height constraint stated above. Similarly, we can define $G^\downarrow$ for the width constraint (red graph in Fig.~\ref{fig:convex_representation}). These two network flows can be described as a linear program. 
   
   To be more precise, we create a linear program formulation of the flow as follows. We associate each arc $a=(f_1,f_2) \in E(G^\rightarrow) \cup E(G^\downarrow)$  with a variable $x_a$ that must have strictly positive values, i.e., $x_a \geq 1$. Then, for each face $f$ that is not the outer face we create two constraints that ensure flow conservation in both $G^\rightarrow$ and $G^\downarrow$, namely,  $\sum_{a=(f,f') \in E(G^\rightarrow)}x_a=\sum_{a=(f',f) \in E(G^\rightarrow)}x_a$ and $\sum_{a=(f,f') \in E(G^\downarrow)}x_a=\sum_{a=(f',f) \in E(G^\downarrow)}x_a$.

   Diagonal segments create an arc in both $G^\rightarrow$ and $G^\downarrow$ (orange highlights in Fig.~\ref{fig:convex_representation}). Thus, we must additionally require the flows along these two arcs to be the same as the height and the width of a diagonal at slope $\pm 1$ are the same. These additional constraints are easily included in our linear program; e.g., we can simply use the same variable for both arcs.

   The resulting linear program can be solved using polynomial time algorithms which yields a rational solution that can be encoded with a polynomial number of bits~\cite{DBLP:journals/combinatorica/Karmarkar84} that can then be scaled the to an integer solution in polynomial time.

    We first state the following lemma, which is an interesting result by itself and that we use as the main tool in our FPT algorithm and XP algorithms. The lemma is directly implied by the described linear program modelling the flow.

\label{se:flow}

   \begin{figure}[t]
    \centering
    \includegraphics[scale=.7,page=1]{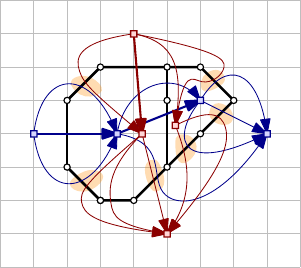}
    \caption{A representation $\mathcal{R}$ and auxiliary flow networks $G^\rightarrow$ (blue) and $G^\downarrow$ (red). The value of flow along all thicker arcs is 2, the value of flow along all other arcs is 1. The positions on the grid corresponds to the solution yielded.}
    \label{fig:convex_representation}
\end{figure}

\begin{lemma}\label{thm:convex}
 \OctR is in P when there is no reflex angle in any internal face and no inflex angle in the outer face in $\mathcal{R}$.
\end{lemma}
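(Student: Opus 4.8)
The plan is to show that the linear program described above is feasible if and only if the representation is realizable, and that a feasible solution can be converted into an actual drawing in polynomial time. The forward direction is straightforward: if a realizing drawing $\Gamma$ exists, then reading off the horizontal and vertical extents of each edge of $\Gamma$ yields values $x_a \geq 1$ (since edges have positive length on the integer grid) that satisfy all flow-conservation constraints, because in $\Gamma$ each internal face is drawn as a convex polygon, so the total horizontal displacement along its top boundary equals that along its bottom boundary, and symmetrically for the vertical direction. The diagonal-equality constraints hold because a slope-$\pm 1$ segment has equal horizontal and vertical extent. Thus $\Gamma$ induces a feasible LP solution.

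For the reverse direction, I would take any feasible LP solution and use it to assign to every edge a horizontal length (from the $G^\rightarrow$ flow) and a vertical length (from the $G^\downarrow$ flow), consistent with the prescribed slope of each segment in $\mathcal{R}$. The key step is to argue that these lengths can be realized simultaneously as a consistent drawing. I would fix one vertex at an arbitrary grid point and propagate positions along the edges using the assigned lengths; the flow-conservation constraints guarantee that this propagation is \emph{consistent}, i.e., traversing any facial cycle returns to the starting point, so vertex coordinates are well defined. Because every internal face is convex and the outer boundary is convex, the resulting placement of each facial cycle is a simple convex polygon traced in the correct rotational order, which I would use to conclude that the drawing is crossing-free and respects the given embedding. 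Hence the drawing realizes $\mathcal{R}$.

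The main obstacle I anticipate is the planarity argument in the reverse direction: feasibility of the flow only guarantees that each face \emph{closes up} correctly, but one still has to rule out overlaps between different faces and verify that the convex polygons fit together without crossings. The intended resolution, already hinted at in the text preceding the lemma, is that convexity of every face (together with convexity of the outer boundary and the correct angular data encoded in $\mathcal{R}$) forces each facial cycle to be drawn as a crossing-free convex polygon consistent with the fixed embedding, so the global drawing inherits planarity from the embedding. Finally, I would invoke the stated fact that the LP is solvable in polynomial time with a polynomially-encodable rational solution~\cite{DBLP:journals/combinatorica/Karmarkar84}, which can be scaled to integers, giving membership in P; this also subsumes the decision step, since infeasibility of the LP certifies $\mathcal{R}=\emptyset$.
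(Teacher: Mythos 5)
Your proposal is correct and follows essentially the same route as the paper: the paper's (very terse) proof simply states that the lemma "is directly implied by the described linear program," with the substance being exactly what you spell out — feasibility of the flow LP is equivalent to realizability, a feasible solution yields a drawing by fixing one vertex and propagating edge lengths, convexity of all faces plus the convex outer boundary ensures the facial cycles are drawn as crossing-free convex polygons respecting the embedding, and Karmarkar's algorithm with rational-to-integer scaling gives polynomial time. Your explicit treatment of both directions and your flagging of the planarity step as the delicate point is a more careful write-up of the same argument, not a different one.
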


\subsection{An FPT Algorithm for Octilinear Realizability}
\label{sec:fpt}

Given Lemma~\ref{thm:convex}, our goal is now to extend it for fixed values of $\omega$. In order to do that, we need further notation and intermediate results. Recall that $\mathcal{R}$ denotes an octilinear representation of a graph $G$.  We describe how to enrich  $G$ and $\mathcal{R}$ so that each face containing at least one reflex angle has size $O(\omega)$. These enrichments are denoted by the  \emph{shadow graph} and \emph{representation}. The key property of a shadow representation $\mathcal{R}'$ is that such a representation is realizable if and only if $\mathcal{R}$ is realizable (Lemma~\ref{le:cages}). Thus, we can work~on the shadow graph and its shadow representation in order to obtain an FPT algorithm.

\paragraph{Shadow Graph and Representation}
\label{subse:shadow}
We begin by defining the \emph{shadow graph} $G'$ and an octilinear representation $\mathcal{R}'$ of $G'$, called the \emph{shadow representation}, which are obtained from the input graph $G$ and its input octilinear representation $\mathcal{R}$, respectively. Initially, we set $G'=G$ and $\mathcal{R}'=\mathcal{R}$. During the construction of the shadow graph and representation, we analyze each face $f$ and enrich the graph considering the vertices incident to $f$ and their angles inside $f$. 

\begin{figure}[tb]
    \centering
    \begin{subfigure}[b]{0.45\textwidth}
        \centering
        \includegraphics[page=1,width=.8\textwidth]{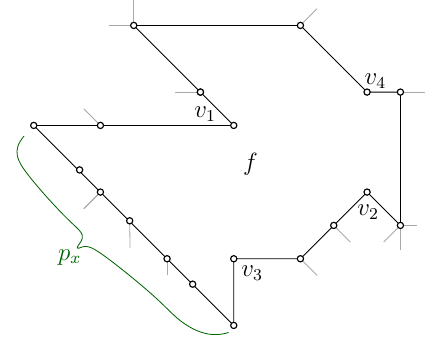}
        \caption{$G$ and $\mathcal{R}$}
        \label{fig:cage-1}
    \end{subfigure}
    \hfil
    \begin{subfigure}[b]{0.45\textwidth}
        \centering
        \includegraphics[page=2,width=.8\textwidth]{shadow_ext.pdf}
        \caption{$G'$ and $\mathcal{R}'$}
        \label{fig:cage-2}
    \end{subfigure}
    \caption{The construction of $G'$ and $\mathcal{R'}$.}
    \label{fig:cage}
\end{figure}

Consider first the internal faces. For each internal face $f$ of $G$, consider each vertex $v$ incident to $f$ and add a vertex $v'$ if the angle of $v$ in $f$ is either strictly convex or reflex. We say that $v'$ is the \emph{shadow vertex} of $v$. We order the shadow vertices following the order of the corresponding vertices and add a cycle connecting them. Let $C_f$ denote this cycle. See the thicker red cycle in Fig.~\ref{fig:cage-2}.

The newly created face $f'$ bounded by $C_f$  is called the \emph{shadow face} of $f$. The angle inside $f'$ of shadow vertex $v'\in C_f$ in $G'$ is the same as the angle inside $f$ of the vertex $v$ in $G$. Finally, we use horizontal and vertical edges and, if necessary, subdivision vertices in order to connect $C_f$ to the border of $f$ in $G'$. We do this as follows. Consider vertex $v$, its shadow vertex $v'$, and the angle $\alpha$ at $v$ inside $f$ in $\mathcal{R}$, which is the same as the angle of $v'$ in $\mathcal{R}'$. Consider the angle $\alpha_f$ at $v'$ in the face between the boundary of $f$ and $C_f$. Note that either $\alpha$ is reflex and $\alpha_f$ is inflex or vice versa. We describe the construction for the case where $\alpha$ is reflex, the other case can be handled similarly. We proceed with a case analysis, refer to Fig.~\ref{fig:cage} where Fig.~\ref{fig:cage-1} depicts $f$ in $G$ also showing the configurations of the angles described in $\mathcal{R}$, while Fig.~\ref{fig:cage-2} shows the following construction.  

If $\mathbf{\alpha=\frac{7}{4}\pi}$,  $v$ is incident to either a horizontal or a vertical edge on the border of $f$. Consider the first case, the second can be treated similarly. We consider the diagonal edge incident to $v'$ and we subdivide it with a vertex $s$. Then we add a horizontal edge from $v$ to this subdivision vertex, fixing the angles accordingly. See $v_1$ in Fig.~\ref{fig:cage-2}. If  $\mathbf{\alpha=\frac{3}{2}\pi}$, $v$ is either incident to two diagonal or to a horizontal and a vertical edge. If $v$ is incident to two diagonal edges, one of its horizontal ports or one of  its vertical ports is free inside $f$: We add an edge connecting $v$ to $v'$ vertically (see $v_2$ in Fig.~\ref{fig:cage-2}) or horizontally, respectively. If $v$ is incident to a horizontal and a vertical edge, we subdivide the horizontal edge of $C_f$ incident to $v'$ with a subdivision vertex $s$ and we add a vertical edge from $v$ to $s$. See $v_3$ in Fig.~\ref{fig:cage-1}. If $\mathbf{\alpha=\frac{5}{4}\pi}$,  e add an edge connecting $v$ to $v'$ that is vertical if $v$ is incident to a horizontal edge or horizontal otherwise. See $v_4$ in Fig.~\ref{fig:cage-2}. If $\mathbf{\alpha < \pi}$, then $\alpha$ is strictly convex and $\alpha'$ is reflex, i.e., we do a symmetric operation. More precisely, we subdivide the facial cycle of $f$ if necessary and select the orientation of the edge connecting $v$ and $v'$ based on the edges incident to the $v'$.





Consider now the outer face $f_\circ$. We construct $C_{f_\circ}$ similarly to how we constructed $C_f$ for an internal face, inverting the construction as in this case $C_{f_\circ}$ is external to $f_\circ$ while $C_f$ was internal with respect to $f$. Refer to Fig.~\ref{fig:extension-2}, that depicts an octilinear drawing realizing the shadow representation $\mathcal{R}'$ of the shadow graph $G'$, where $G$ and $\mathcal{R}$ are depicted in Fig.~\ref{fig:extension-1}. In the following, we denote by $G'$ and $\mathcal{R}'$ the shadow graph and representation of $G$ and $\mathcal{R}$.

\begin{figure}[tb]
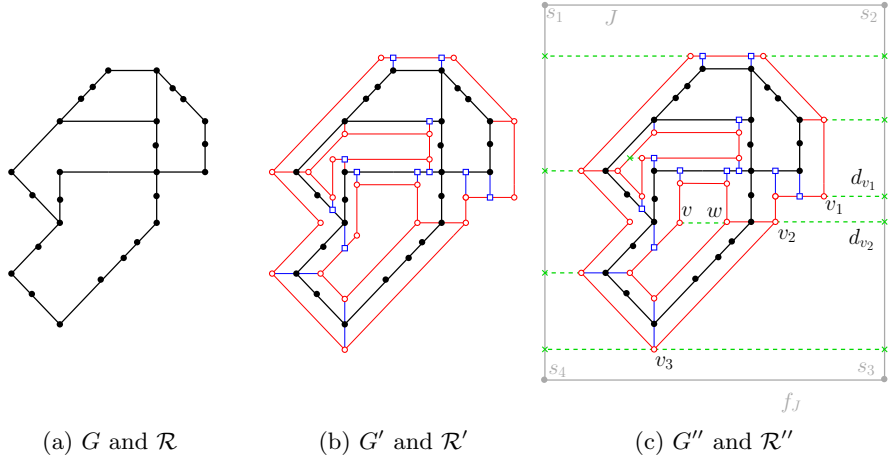

    \centering
    \begin{subfigure}[t]{0.3\textwidth}
        \centering
        \includegraphics[page=3,scale=.4]{shadow_ext.pdf}
        \caption{$G$ and $\mathcal{R}$}
        \label{fig:extension-1}
    \end{subfigure}
    \hfil
    \begin{subfigure}[t]{0.3\textwidth}
        \centering
        \includegraphics[page=4,scale=.4]{shadow_ext.pdf}
        \caption{$G'$ and $\mathcal{R}'$}
        \label{fig:extension-2}
    \end{subfigure}
    \hfil
    \begin{subfigure}[t]{0.38\textwidth}
        \centering
        \includegraphics[page=5,scale=.4]{shadow_ext.pdf}
        \caption{$G''$ and $\mathcal{R}''$}
        \label{fig:extension-3}
    \end{subfigure}
    \caption{(a)~Graph $G$ and its representation $\mathcal{R}$. 
    (b)~Shadow graph $G'$ and shadow representation $\mathcal{R}'$ obtained from Graph $G$ and its representation $\mathcal{R}$. (c)~Planar extension $\mathcal{R}''$ of $\mathcal{R}'$.}
    \label{fig:extension}
\end{figure}


\begin{lemma}
\label{le:cages}
$\mathcal{R'}\not = \emptyset$ if and only if $\mathcal{R}\not = \emptyset$.
\end{lemma}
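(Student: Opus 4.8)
The plan is to prove the two implications separately, exploiting the fact that by construction $G$ is a \emph{topological subgraph} of the shadow graph $G'$: $G'$ is obtained from $G$ only by adding new vertices and edges strictly inside each internal face $f$ (namely the shadow cycle $C_f$, its shadow face $f'$, the needed subdivision vertices, and the connecting edges) and strictly outside the outer face $f_\circ$. No original vertex or edge of $G$ is ever removed, and the rotation system at every original vertex is only refined. Both directions will exploit this nesting.

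For the direction $\mathcal{R}' \neq \emptyset \Rightarrow \mathcal{R} \neq \emptyset$, I take any octilinear drawing $\Gamma' \in \mathcal{R}'$, restrict it to the vertices and edges of $G$, and then smooth every remaining degree-$2$ vertex. Such vertices are exactly the subdivision vertices that had been placed on original boundary edges and whose connecting edge has now been deleted; since each of them lies on a straight boundary segment, its two surviving incident edges are collinear and the smoothing is octilinear. I claim the resulting drawing $\Gamma$ realizes $\mathcal{R}$. Indeed, deleting the shadow structure merges each shadow face $f'$ together with the annular region between $C_f$ and the boundary of $f$ back into the single face $f$, so the embedding of $\mathcal{R}$ is recovered; and at every original corner $v$ the connecting edge, which in $\mathcal{R}'$ had split the angle $\alpha$ of $v$ inside $f$ into two sub-angles, is removed, so the angle at $v$ inside $f$ is restored to exactly $\alpha$. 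Since slopes and bends along the original edges are untouched, $\Gamma \in \mathcal{R}$.

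For the converse direction $\mathcal{R} \neq \emptyset \Rightarrow \mathcal{R}' \neq \emptyset$, I start from a drawing $\Gamma \in \mathcal{R}$ and add the shadow structure face by face. For an internal face $f$ drawn as a simple octilinear polygon $P_f$, I realize $C_f$ as the parallel inward offset of $P_f$ by a sufficiently small $\epsilon > 0$: a parallel offset preserves all edge slopes, and for $\epsilon$ small enough no edge collapses, so the offset is again a simple octilinear polygon whose corner-angle sequence is \emph{identical} to that of $P_f$; hence it realizes $C_f$ with the prescribed angles. I then attach each shadow vertex $v'$ to its corner $v$ by the horizontal or vertical \emph{spoke} dictated by the per-angle case analysis of the construction (subdividing an edge of $C_f$, or of the boundary of $f$ in the strictly convex case, where required). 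For small $\epsilon$ each spoke is short, lands on the corresponding offset edge incident to $v'$, and splits the angle at $v$ exactly as specified, so every newly created cell is a simple octilinear region. The outer face $f_\circ$ is handled identically using a small \emph{outward} offset, with the roles of reflex and inflex angles exchanged (cf.\ Fig.~\ref{fig:extension}). Taking $\epsilon$ small enough keeps all new edges inside $f$ (resp.\ outside $f_\circ$) and crossing-free, so $\Gamma$ extends to a drawing $\Gamma' \in \mathcal{R}'$.

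I expect the converse direction to be the main obstacle. The delicate point is not that a closed octilinear polygon with $C_f$'s angles exists, which is immediate from the offset (and consistent with Lemma~\ref{thm:convex}), but that the axis-parallel spokes can be attached \emph{simultaneously} and consistently with that offset copy at every corner. This is precisely what the case distinction on $\alpha \in \{\tfrac{5}{4}\pi, \tfrac{3}{2}\pi, \tfrac{7}{4}\pi\}$ for reflex corners, together with the symmetric treatment of strictly convex corners (cf.\ $v_1,\dots,v_4$ in Fig.~\ref{fig:cage}), guarantees: for each angle value it fixes whether $v'$ is reached horizontally or vertically and whether a subdivision vertex on $C_f$ or on the boundary of $f$ is needed, so that the spoke, the offset edges, and the two sub-angles at $v$ all remain octilinear. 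Once this local consistency is verified for every gadget, choosing $\epsilon$ small settles global planarity, completing the proof.
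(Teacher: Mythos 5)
Your two-direction strategy is the same as the paper's: delete the shadow structure for one implication, and insert $C_f$ with small edges near the boundary of each face for the other (the paper phrases the latter as scaling the drawing of $G$ and adding the new vertices and edges ``following the construction'' with arbitrarily small edge lengths). Your forward direction, including the explicit smoothing of the subdivision vertices that remain on original edges, is correct and slightly more careful than the paper's one-line version. The genuine gap is in the converse direction, at precisely the step you flag as delicate but then claim the case analysis ``guarantees'': the \emph{uniform} parallel inward offset is incompatible with the direct spokes. Take a corner $v$ with $\alpha=\frac{5}{4}\pi$, with a horizontal edge leaving $v$ in direction $(1,0)$ and a diagonal edge in direction $(-1,-1)/\sqrt{2}$, the interior lying counterclockwise between them. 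The construction puts into $\mathcal{R}'$ the \emph{edge} $(v,v')$ and prescribes it to be vertical, so in every realization $v'$ must lie exactly on the vertical line through $v$; there is no slack here, unlike in the $\frac{7}{4}\pi$ case or the horizontal-plus-vertical $\frac{3}{2}\pi$ case, where the spoke ends at a subdivision vertex of an edge of $C_f$ and may land anywhere on that edge. But offsetting both edge lines by the same perpendicular distance $\epsilon$ places $v'$ at the intersection of $y=\epsilon$ with $y=x+\epsilon\sqrt{2}$, i.e., at $\bigl(\epsilon(1-\sqrt{2}),\,\epsilon\bigr)$, which is not vertically aligned with $v=(0,0)$. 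So at this corner (and symmetrically at strictly convex corners with $\alpha=\frac{3}{4}\pi$) your drawing of $C_f$ does not realize $\mathcal{R}'$, and your statement that each spoke ``lands on the corresponding offset edge incident to $v'$'' is not what $\mathcal{R}'$ demands there.

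The gap is local and fixable, but it needs an argument rather than an appeal to the case analysis. Replace the Euclidean offset by the non-uniform one in which every horizontal and vertical edge line is moved by perpendicular distance $\epsilon$ while every diagonal edge line is moved by $\epsilon/\sqrt{2}$ (the Chebyshev, or $L^\infty$, offset). A direct spoke occurs only at corners joining two diagonal edges, where alignment of $v'$ with $v$ requires the two perpendicular offsets to be equal, or at corners joining one axis-parallel and one diagonal edge, where alignment requires the axis offset to be $\sqrt{2}$ times the diagonal offset; both conditions hold for this offset, while in the remaining (subdivision-spoke) cases the axis-parallel ray from $v$ still hits the intended edge of $C_f$ at distance $\epsilon$ from $v$ once $\epsilon$ is small enough. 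With that substitution, and the same small-$\epsilon$/scaling argument for planarity (which matches the paper's remark that the connecting edges can be chosen arbitrarily small), your proof of Lemma~\ref{le:cages} goes through.
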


\begin{proof}
Given an octilinear  drawing of $G'$ one can obtain an octilinear drawing of $G$ by removing the shadow vertices and their incident edges; given an octilinear drawing of $G$ one can obtain a drawing of $G'$ by adding the vertices of $C_f$ and the corresponding edges following the construction of $\mathcal{R}'$, eventually scaling the drawing to create space for such new vertices and edges when considering the internal faces. To this end, also note that the lengths of edges connecting vertices of $f$ with vertices of $C_f$ can be chosen arbitrarily small.$\hfill\qed$
\end{proof}

For every face of the shadow graph containing a reflex angle the number of vertices forming $\pi$ angles inside is bounded by the number of reflex angles inside (as the angles forming $\pi$ angles are exactly the angles created by subdivisions when defining the face of the shadow graph). Thus, we have the following:

\begin{lemma}
\label{le:facesbounded}
The total length of the non-convex faces of $G'$ is $O(\omega)$.
\end{lemma}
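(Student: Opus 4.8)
The plan is to count the vertices on each non-convex face of the shadow graph $G'$ and to bound this by the number of reflex angles in $\mathcal{R}$. Recall from the construction that the non-convex faces of $G'$ are precisely the shadow faces $f'$ bounded by the cycles $C_f$ (for internal faces $f$ containing a reflex angle) together with the outer shadow face. I would first observe that every vertex of $C_f$ is a \emph{shadow vertex} $v'$ corresponding to a vertex $v$ incident to $f$ whose angle inside $f$ is either strictly convex or reflex; moreover the angle of $v'$ inside $f'$ equals the angle of $v$ inside $f$. Hence the corners of the shadow face $f'$ are either strictly convex or reflex, and there are no $\pi$-angles \emph{on $C_f$ itself}. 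The subdivision vertices introduced during the construction (the vertices $s$, and those used to connect $C_f$ to the border of $f$) lie on the facial boundaries \emph{between} $C_f$ and $\partial f$, not on $C_f$, so they do not inflate the length of the shadow face.

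Next I would make the key combinatorial estimate. On the shadow face $f'$, consider its reflex corners and its strictly convex corners. By the construction, the reflex corners of $f'$ are in bijection with the reflex angles of $f$. For a closed octilinear polygon the number of convex corners exceeds the number of reflex corners by a constant (in the orthogonal case the difference is $4$; here, allowing $\pm1$ slopes and the eight possible angle values, the excess is still bounded by a fixed constant). Therefore the total number of corners of $f'$ is at most a constant times the number of reflex angles inside $f$, i.e.\ the length of $C_f$ is $O(r_f)$ where $r_f$ is the number of reflex angles on $f$. Summing over all internal non-convex faces $f$, and treating the outer shadow face analogously using the inflex angles of $f_\circ$ (whose number is likewise controlled by $\omega$ up to an additive constant), gives that the total length of all non-convex faces is $O(\sum_f r_f) = O(\omega)$.

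The sentence preceding the lemma already flags the delicate point: the vertices forming $\pi$ angles inside a non-convex face are exactly those created by subdivisions when connecting $C_f$ to $\partial f$, and one must argue these contribute $O(\omega)$ as well. I would handle this by noting that each subdivision vertex is created while processing a single reflex (or strictly convex) corner of $f$ in the case analysis, and each such corner triggers at most a constant number of subdivisions; since the number of corners of $f'$ is already $O(r_f)$, the $\pi$-angle vertices are charged to these corners and contribute only another $O(r_f)$ to the boundary length of the faces squeezed between $C_f$ and $\partial f$.

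The main obstacle will be making the ``convex exceeds reflex by a constant'' bound fully rigorous in the octilinear setting, where corner angles take values in $\{\tfrac14\pi,\tfrac12\pi,\tfrac34\pi,\pi,\tfrac54\pi,\tfrac32\pi,\tfrac74\pi\}$ rather than just multiples of $\tfrac12\pi$. I would derive it from the angle-sum identity for a simple octilinear polygon: the sum of interior angles of a $k$-gon is $(k-2)\pi$, and writing each corner's deviation from $\pi$ as a multiple of $\tfrac14\pi$, the total deviation is forced to equal $\pm 2\pi$ depending on orientation. Since reflex corners contribute a bounded negative deviation ($-\tfrac14\pi$ or $-\tfrac12\pi$) and strictly convex corners a bounded positive deviation, the number of convex corners can exceed the number of reflex corners by at most a fixed constant determined by the minimal convex deviation. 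This pins down $|C_f| = O(r_f)$ and closes the argument.
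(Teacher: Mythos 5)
Your proof is correct and follows essentially the same route as the paper's: the boundary of each shadow face consists only of strictly convex corners, reflex corners, and subdivision vertices (constantly many per corner), the $\pi$-angle vertices of $f$ having been collapsed into edges of $f'$, and the convex-corner count is controlled by the reflex-corner count via the turning-angle identity --- a fact the paper merely asserts ("$O(\omega)$ reflex angles implies $O(\omega)$ inflex angles") and you make explicit. Two harmless slips worth noting: some subdivision vertices do lie on $C_f$ itself (in the $\alpha=\frac{3}{2}\pi$ case the paper explicitly subdivides "the horizontal edge of $C_f$ incident to $v'$"), and in the octilinear setting the number of convex corners is not bounded by the reflex count plus an additive constant (your own deviation inequality only yields $C \le 3R+8$, and this is tight) --- but both are absorbed by your charging argument and the $O(r_f)$ bound, which is all the lemma needs.
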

\begin{proof}
Observe that an octilinear representation with $O(\omega)$ reflex angles has $O(\omega)$ inflex angles. Concerning a face $f$ of $G$ with $\omega_f$ reflex angles, we could still have $\Omega(n)$ vertices on its boundary, since many of them could form  angles of $\pi$ inside $f$. The same does not hold by construction for the shadow faces in $G'$. By construction, a shadow face $f'$ of $G'$ constructed for a face $f$ with $\omega_f$ reflex corners has $O(\omega_f)$ vertices as only strictly convex, i.e., inflex, and reflex corners plus subdivision vertices that were introduced together with reflex corners (at most one subdivision vertex per reflex corner)  define its boundary. All the vertices forming an angle of $\pi$ inside $f$ correspond to an edge in $f'$; see for instance the path $p_x$ that is incident to $f$ in Fig.~\ref{fig:cage-1} and that corresponds to the edge $e_x$ incident to $f'$ in Fig.~\ref{fig:cage-2}. By construction, all the faces of $G'$ incident to vertices of $G$ that were forming angles of $\pi$ in the original representation inside $f$ have no reflex angle inside. Hence, the total length of the shadow faces, summing over all faces $F_r$ with reflex corners is $\sum_{f\in F_r}O(\omega_f)=O(\omega)$. $\hfill\qed$
\end{proof}

\paragraph{Planar Extensions and FPT Algorithm}
\label{subse:fpt-alg}
Given the shadow graph $G'$ and its shadow representation $\mathcal{R}'$ as introduced in Section~\ref{subse:shadow}, an \emph{extension} of  $\mathcal{R}'$ (and consequently  of $G'$, which is the graph that $\mathcal{R}'$ represents) is defined as follows. We first add a 4-cycle $J$ consisting of four additional vertices $s_1$, $s_2$, $s_3$, and $s_4$ in $G'$. The angles around these vertices are $\frac{\pi}{2}$ inside the cycle and $\frac{3\pi}{2}$ outside. 
We define the outer face $f_J$ of $G'$ to be equal to the face of this cycle having the reflex angles outside. Hence, the rest of the graph has to be inside  $J$. Let $G''$ be the plane graph obtained so far, which consists of $G'$ and $J$. Let $\mathcal{R''}$ be the representation of $G''$ obtained as described above.

Consider $G''$ and $\mathcal{R}''$ and a shadow face $f'$ of $G''$, whose border is $C_f$. Let $v$ be a vertex of $C_f$ forming a reflex angle in $f'$.  Let $w$ be another vertex of $C_f$. Let $e$ be an edge of $C_f$ or, if $f$ is the outer face of $G$, it could be one of the four edges of $J$. Observe that in this case $v$ is incident to the face between the two disconnected components $J$ and $G'$ of $G''$.
We define an operation that we call \emph{alignment}, that takes as input either $v$ and $w$ or $v$ and $e$: In the first case, we add an edge connecting $v$ to $w$; in the second case, we subdivide $e$ and we add an edge connecting $v$ to the subdivision vertex $d_v$ added in $e$. In both cases, we change the angles in the representation accordingly so that the new edge is horizontal. 
Fig.~\ref{fig:extension-3} shows the octilinear representation of $\mathcal{R'}$ depicted in Fig.~\ref{fig:extension-2} after performing alignments (added edges are dashed). Vertex $v$ is aligned to $w$ and both $v_1$ is aligned to $(s_2,s_3)$, the respective dummy vertices are $d_{v_1}$ and $d_{v_2}$.

We now define the pair $\langle G_{+},\mathcal{R}_+\rangle$, constructed starting from $G''$ and $\mathcal{R}''$ and iteratively choosing one possible alignment for a vertex $v$ forming a reflex angle in a face of $\mathcal{R}''$. When two or more vertices align at the same edge $(a,b)$, we chose an ordering of the subdivision vertices associated to such vertices when traversing such subdivided edge from $a$ to $b$. 
A single vertex can be considered at most twice by this procedure, as adding two horizontal edges to $\mathcal{R''}$ in any vertex makes sure that no angle around the vertex is larger than $\pi$. See for example vertex $v_3$ in Fig.~\ref{fig:extension-3}. Also, the considered ordering could imply that the obtained graph with the given rotation system at the vertices and the given outer face is not planar. For example, consider $v_1$ and $v_2$ that are both aligned to edge $(s_2,s_3)$, in Fig.~\ref{fig:extension-3}. If the order of $d_{v_1}$ and $d_{v_2}$ was switched, 
the corresponding graph with the given rotation system and $f_J$ as outer face would not be planar.


Given one of the possible extensions $\langle G_{+},\mathcal{R}_+\rangle $, of $G'$ and $\mathcal{R'}$ we have a graph that might be not planar given that an outer face and that the rotation system of the vertices in $\mathcal{R'}$ is fixed. We say that an extension is \emph{planar} if the graph $G'_+$ with the rotation system defined by $\mathcal{R}'_+$ is planar and $J$ as the cycle bounding the outer face. We prove the following two observations in Appendix~\ref{app:fpt}. 

\begin{restatable}{lemma}{fewext}
\label{le:fewext}
\renewcommand{\mylink}{\statlink{le:fewext}\prooflink{le:fewextapp}}
There exists $2^{O(\omega^2)}$ possible extensions of $G'$ and $\mathcal{R'}$.
\end{restatable}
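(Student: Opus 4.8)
The plan is to bound the number of extensions by first bounding the number of \emph{elementary} choices that define one, and then bounding the number of ways these choices combine. Recall that an extension $\langle G_+,\mathcal{R}_+\rangle$ is obtained from $G''$ and $\mathcal{R}''$ by selecting, for each reflex vertex, an alignment (to a vertex $w$ of its face, to an edge $e$ of its face, or to one of the four edges of $J$ when the face is the outer one) and, whenever several alignments subdivide the same edge, an ordering of the corresponding subdivision vertices along that edge.

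First I would count the distinct alignment operations that are available at all. By Lemma~\ref{le:facesbounded} the total length of the non-convex faces of $G'$ is $O(\omega)$; in particular there are $O(\omega)$ reflex vertices, and each reflex vertex $v$ lying in a shadow face $f'$ bounded by $C_f$ can be aligned only to a vertex or an edge of $C_f$ (or, for the outer face, to one of the four edges of $J$), of which there are $O(\omega)$. Since each reflex vertex may align to $O(\omega)$ targets and there are $O(\omega)$ reflex vertices, the number of admissible (reflex vertex, target) pairs is at most $O(\omega)\cdot O(\omega)=O(\omega^2)$; including the at most two horizontal directions in which the alignment edge can be inserted only multiplies this by a constant. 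Hence there are $O(\omega^2)$ possible alignment operations overall.

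Next I would observe that an extension is fully determined by two data: (i) the subset of these alignment operations that are actually performed, and (ii) for each edge that receives more than one subdivision vertex, a linear order of these vertices along the edge. For (i), the number of subsets of an $O(\omega^2)$-element set is $2^{O(\omega^2)}$; the requirement that each reflex vertex be used at most twice merely restricts which subsets are admissible, so this is a valid upper bound. For (ii), the total number of subdivision vertices is at most the number of performed alignments, which is $O(\omega)$ (at most two per reflex vertex), so the number of orderings is at most $O(\omega)! = 2^{O(\omega\log\omega)}$, which is dominated by $2^{O(\omega^2)}$. Multiplying the two contributions gives $2^{O(\omega^2)}\cdot 2^{O(\omega\log\omega)} = 2^{O(\omega^2)}$ extensions, as claimed.

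The arithmetic above is routine; the point that needs care, and which I expect to be the main obstacle, is arguing that the pair (subset of alignments, orderings on shared edges) is a \emph{complete} encoding of an extension, i.e.\ that nothing else can vary. Concretely I would check that once the target and the relative order along a shared edge are fixed, the position of each subdivision vertex, the direction of each alignment edge, and the resulting angle assignment in $\mathcal{R}_+$ are all determined, so that no two distinct extensions collapse to the same encoding and no extension fails to be captured. Granting this, the upper bound $2^{O(\omega^2)}$ on the number of encodings bounds the number of extensions.
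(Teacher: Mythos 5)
Your proof is correct and follows essentially the same route as the paper's: bound the alignment targets of each reflex vertex by $O(\omega)$ via Lemma~\ref{le:facesbounded}, then multiply by the number of orderings of subdivision vertices sharing an edge. The only (harmless) bookkeeping difference is that you count subsets of the $O(\omega^2)$ admissible (reflex vertex, target) pairs, giving $2^{O(\omega^2)}$ outright, while the paper counts a target choice per reflex vertex together with the factorially many orderings, which lands within the same bound.
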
 
%
%
%
\begin{restatable}{lemma}{extrealiz}
\label{le:extrealiz}
$\mathcal{R}'\not= \emptyset$ $\Leftrightarrow$ There exists an extension $\mathcal{R}_+'\not = \emptyset$ of $\mathcal{R}'$.
\end{restatable}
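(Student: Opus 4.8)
The plan is to prove the two implications separately, treating $\Leftarrow$ as the routine direction and concentrating the work on $\Rightarrow$. For $\Leftarrow$, I would start from an octilinear drawing $\Gamma_+$ realizing some extension $\mathcal{R}_+$ and simply undo the three operations by which an extension is built on top of $\mathcal{R}'$. Concretely, I would delete the bounding $4$-cycle $J$ together with its edges, delete every alignment edge, and then suppress each subdivision vertex $d_v$ (a vertex of degree $2$), which merges the two halves of the subdivided edge back into the original edge of $C_f$ or of $J$. Deleting edges and suppressing degree-$2$ vertices keeps the drawing planar and octilinear, and since an alignment only alters the angles at the endpoints of the edge it inserts, removing these edges restores exactly the angles prescribed by $\mathcal{R}'$ at every surviving vertex. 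The restricted drawing therefore realizes $\mathcal{R}'$, so $\mathcal{R}' \neq \emptyset$.

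For $\Rightarrow$, I would fix a drawing $\Gamma'$ realizing $\mathcal{R}'$ and let its geometry dictate a concrete extension. First I would enclose $\Gamma'$ in a large axis-parallel rectangle playing the role of $J$, yielding a drawing of $G''$ that realizes $\mathcal{R}''$. Then, for every vertex $v$ that forms a reflex angle in a (shadow or outer) face $f$, I would shoot a horizontal ray from $v$ into $f$ along each of the free horizontal directions contained in the reflex wedge; there are at most two such directions, matching the stated fact that a vertex is considered at most twice, and together the resulting horizontal edges split the reflex angle into non-reflex parts. Each ray stops at the first point $q$ where it meets the boundary of $f$: if $q$ is a vertex $w$ I record the alignment $(v,w)$, and otherwise $q$ lies in the relative interior of an edge $e$ and I record $(v,e)$ with $d_v$ placed at $q$. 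Drawing each alignment edge as the horizontal segment from $v$ to $q$ augments $\Gamma'$ (with $J$) into a drawing $\Gamma_+$ whose representation is, by construction, an extension $\mathcal{R}_+$ of $\mathcal{R}'$ that it realizes.

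The remaining and, I expect, principal obstacle is to certify that the extension $\mathcal{R}_+$ thus obtained is \emph{planar}. The crossing-freeness of $\Gamma_+$ follows because each alignment segment lies inside the face from which it was shot and ends at the first boundary it reaches, so it meets no edge of $G''$ and, being horizontal, meets no alignment segment at a different height; hence $\Gamma_+$ is a plane drawing and the ordering of several subdivision vertices $d_v$ along a common edge $e$ is forced to be their geometric order along $e$, which is exactly a planar ordering (compare the $v_1,v_2$ example in Fig.~\ref{fig:extension-3}). The delicate case is two reflex vertices sharing a height, so that a ray could meet another reflex vertex or run along a boundary edge. I would handle this by recalling that the realizing drawings of $\mathcal{R}'$ are exactly the solutions of a system that leaves the edge lengths free up to positivity and the per-face closing equations: whenever such a coincidence is not forced I may adjust the lengths to remove it, and whenever it is forced the ray from one vertex meets the other reflex vertex on the boundary, so the \emph{vertex}-type alignment $(v,w)$ applies and the construction still returns an allowed alignment. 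In either situation every reflex angle is resolved by horizontal edges placed consistently with a plane drawing, so $\mathcal{R}_+$ is a planar extension with $\mathcal{R}_+ \neq \emptyset$, which completes the equivalence.
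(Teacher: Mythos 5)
Your proposal is correct and follows essentially the same route as the paper's proof: for $\Leftarrow$ you remove the added structure ($J$, alignment edges, subdivision vertices) and smooth, and for $\Rightarrow$ you enclose a realization of $\mathcal{R}'$ in the rectangle $J$ and let horizontal rays from each reflex vertex to the first boundary point of its face define the alignments. Your treatment is in fact more detailed than the paper's (which omits the planarity certification and the degenerate coincidence cases you discuss), but it is the same argument.
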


We are now ready to combine our results into an FPT algorithm.

\begin{restatable}{theorem}{ftprealizability}
\label{th:ftprealizability}
\renewcommand{\mylink}{\statlink{th:ftprealizability}\prooflink{pftprealizability}}
\OctR is FPT with respect to $\omega$.
\end{restatable}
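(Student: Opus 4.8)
The plan is to assemble the preceding lemmas into a bounded branching procedure that reduces the realizability of $\mathcal{R}$ to polynomially many instances of the convex case handled by Lemma~\ref{thm:convex}. First I would replace the input $\mathcal{R}$ by its shadow representation $\mathcal{R}'$; this construction is polynomial and, by Lemma~\ref{le:cages}, preserves realizability, i.e.\ $\mathcal{R}'\neq\emptyset \Leftrightarrow \mathcal{R}\neq\emptyset$. The reason for passing to $\mathcal{R}'$ is Lemma~\ref{le:facesbounded}: the non-convex faces of $G'$ now have total length $O(\omega)$, so the number of reflex corners that an extension must resolve, and hence the number of alignment decisions, is bounded by a function of $\omega$ alone rather than by $n$.

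Next I would enumerate all extensions $\langle G_{+},\mathcal{R}_+\rangle$ of $G'$ and $\mathcal{R}'$. By Lemma~\ref{le:fewext} there are only $2^{O(\omega^2)}$ of them, and each is generated in polynomial time. For every candidate I would first test, in polynomial time, whether it is planar with the prescribed rotation system and with $J$ as the cycle bounding the outer face; the non-planar candidates are discarded, since they are trivially non-realizable. For each surviving planar extension $\mathcal{R}'_+$ the defining feature is that an alignment has been inserted at every reflex vertex (each vertex at most twice), so that no angle around any such vertex exceeds $\pi$; combined with the convex rectangular boundary $J$, this means $\mathcal{R}'_+$ has no reflex angle in any internal face and no inflex angle in its outer face. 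Hence Lemma~\ref{thm:convex} applies verbatim and lets me decide $\mathcal{R}'_+\neq\emptyset$ in polynomial time.

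I would then report that \OctR answers ``yes'' for $\mathcal{R}$ exactly when at least one planar extension is realizable. Correctness follows by chaining the equivalences: $\mathcal{R}\neq\emptyset \Leftrightarrow \mathcal{R}'\neq\emptyset$ by Lemma~\ref{le:cages}, and $\mathcal{R}'\neq\emptyset \Leftrightarrow$ some extension $\mathcal{R}'_+\neq\emptyset$ by Lemma~\ref{le:extrealiz}; the realizability of each $\mathcal{R}'_+$ is precisely what the convex-case test of Lemma~\ref{thm:convex} decides, and ignoring non-planar extensions is harmless because those are empty. The total running time is $2^{O(\omega^2)}\cdot\mathrm{poly}(n)$, which is exactly the required FPT form in the parameter $\omega$.

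The step I expect to be the main obstacle, and the one I would argue most carefully, is the claim that every planar extension is genuinely convex in the sense demanded by Lemma~\ref{thm:convex}. One must verify that applying alignments at all reflex vertices eliminates every reflex angle from the internal faces while creating none, and in particular that the faces newly formed between $G'$ and $J$ (whose boundaries contain the subdivision vertices $d_v$ placed on the edges of $J$) are themselves convex. This rests on the facts already built into the construction, namely that each reflex vertex possesses a horizontal port through which its angle can be split and that at most two alignments per vertex bring all incident angles down to at most $\pi$, so that after the full extension the only non-convex corners are the four reflex corners of $J$, which lie harmlessly in the outer face. I would state this convexity explicitly for each branch so that the hypotheses of Lemma~\ref{thm:convex} are indisputably satisfied.
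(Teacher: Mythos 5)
Your proposal is correct and follows essentially the same route as the paper's proof: construct the shadow representation, enumerate the $2^{O(\omega^2)}$ extensions (Lemma~\ref{le:fewext}), decide each one via the convex-case flow/LP test of Lemma~\ref{thm:convex}, and chain Lemmas~\ref{le:cages} and~\ref{le:extrealiz} for correctness, yielding the $2^{O(\omega^2)}\cdot\mathrm{poly}(n)$ bound. The only cosmetic differences are that the paper additionally prepends a trivial rejection step (vertex of degree at least $9$ or non-planar input) and spells out explicit size bounds on the shadow graph, while you make the planarity filtering of extensions and the convexity of the faces between $G'$ and $J$ more explicit; none of this changes the substance of the argument.
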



\begin{proof}
Let $\mathcal{R}$ denote an octilinear representation of a graph $G$. If $G$ has a vertex with degree at least $9$ or  if it is not planar, we reject the instance. Otherwise, we construct the shadow graph $G'$ of $G$ and the shadow representation $\mathcal{R'}$ of $\mathcal{R}$. Then we consider all the possible extensions of $\mathcal{R}'$ and we test the realizability of each extension using Lemma \ref{thm:convex}, which can be done as each extension has only four angles in the outer face, which are reflex angles, and no reflex angle inside by construction. We have that one of such extensions is realizable if and only if $\mathcal{R}$ is realizable by   Lemmas \ref{le:cages} and~\ref{le:extrealiz}. We now discuss the running time of this procedure. In order to construct $G'$ and $\mathcal{R}'$ it suffices to traverse each face of $G$ $O(1)$ times and, for each vertex and each edge, to perform $O(1)$ operations consisting in the case analysis depending on the angle at that vertex inside the face. Such operations are vertex addition, edge addition, and edge subdivision operations. Hence, constructing $G'$ and $\mathcal{R'}$ can be done in polynomial time. To this end, note in particular, that the total number of vertices in the shadow graph is upper bounded by $17n$ as each vertex is incident to at most $8$ faces and for each face we may create a shadow vertex plus a subdivision vertex in $G'$. This also limits the total number of edges of the shadow graph to $51n-6$. Both of these values are independent of $\omega$. Next, there are $2^{O(\omega^2)}$ possibilities to for the extension of $\mathcal{R'}$. On the other hand, each of these extensions can be done in polynomial time and the consecutive realizability test using $\mathcal{R'}$ runs in polynomial time. Thus,  the computational time of the entire procedure is of type $f(n) 2^{O(\omega^2)}$ where $f(n)$ is a polynomial function on variable $n$.   $\hfill\qed$
\end{proof}


\subsection{An XP Algorithm for Octilinear Compaction}
\label{sec:xp}

We assume  there is no reflex angle in any internal face and no inflex angle in the outer face in $\mathcal{R}$. Observe that $\omega\in \{3,4\}$ in this case since the outer face always contains at least three reflex corners. Moreover, by definition of \OctC, we may assume that $\mathcal{R}$ admits a realizing drawing. Note that this property can also be tested using Theorem~\ref{th:ftprealizability} since $\omega\in \{3,4\}$. We now describe an XP algorithm on parameter $\delta$.  Our goal is first proving that there exists a drawing of $G$ realizing $\mathcal{R}$ such that the area of $\Gamma$ depends exponentially only on the variable $\delta$. In order to do that, given $G$ and $\mathcal{R}$, we define the \emph{skeleton graph} $G^*$ and the \emph{skeleton representation}  $\mathcal{R}^*$. Refer to Fig. \ref{fig:octy-xp}. 

We define $\mathcal{R}^*$ given $\mathcal{R}$, the definition of $G^*$ given $G$ is then implicit. The vertices in $\mathcal{R}^*$ are the ones incident to at least one diagonal edge in $\mathcal{R}$. These vertices are represented by squares in Fig.\ref{fig:octy-xp} and diagonals are represented thicker. Let $u$ and $v$ be two vertices of $\mathcal{R}^*$: We add an edge with slope $+1$ or $-1$ if such edge is also present in $\mathcal{R}$; we add a horizontal (resp. vertical) edge if there is an horizontal (resp. vertical) path connecting $u$ and $v$ in $\mathcal{R}$. 
\begin{figure}[tb]
  \begin{subfigure}[t]{0.48\textwidth}
        \centering
        \includegraphics[page=1,scale=.65]{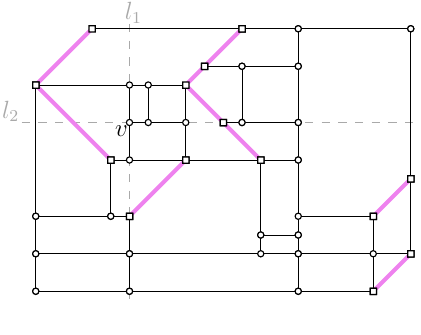}
        \caption{$G$ and $G^*$}
        \label{fig:octy-xp-a}
    \end{subfigure}
    \hfil
    \begin{subfigure}[t]{0.48\textwidth}
        \centering
        \includegraphics[page=2,scale=.65]{octy-xp.pdf}
        \caption{$\mathcal{R}$ and $\mathcal{R}^*$}
        \label{fig:octy-xp-b}
    \end{subfigure}
    \caption{(a) The graph $G$ and its representation $\mathcal{R}$. (b) The compressed graph and representation $G^*$ and $\mathcal{R}^*$. The figures illustrate the proof of Lemma~\ref{le:compressed-drawing}.}
    \label{fig:octy-xp}
\end{figure}

\begin{lemma}
\label{le:compressed-drawing}
There exists a drawing of $G$ realizing $\mathcal{R}$ with area $O(n^2 \cdot A(\delta))$, where $A(\delta)$ is a computable function.
\end{lemma}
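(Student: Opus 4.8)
The plan is to reduce the area analysis to the skeleton representation $\mathcal{R}^*$, whose size is controlled by $\delta$, and then to re-inflate the contracted straight paths while enlarging the bounding box only by a factor polynomial in $n$. First I would observe that $\mathcal{R}^*$ inherits convexity from $\mathcal{R}$: contracting a maximal horizontal (resp.\ vertical) path only removes degree-$2$ vertices forming angles of $\pi$, so every internal face of $\mathcal{R}^*$ stays convex and the outer boundary stays a convex polygon. Hence $\mathcal{R}^*$ satisfies the hypotheses of Lemma~\ref{thm:convex}. Moreover, $G^*$ has size $O(\delta)$: its vertices are the endpoints of the $\delta$ diagonals, so at most $2\delta$ of them, and being planar it has $O(\delta)$ edges.

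The step I expect to carry most of the weight is bounding the skeleton area independently of $n$. Applying the flow construction of Section~\ref{se:flow} to $G^*$, the resulting linear program lives on a graph of size $O(\delta)$, so a basic feasible solution has bit-complexity polynomial in $|G^*| = O(\delta)$; scaling it to integers yields a realizing drawing $\Gamma^*$ of $\mathcal{R}^*$ whose integer edge lengths, and therefore whose area, are bounded by a computable function $A(\delta)$ of $\delta$ alone. Since $\mathcal{R}$ is given realizable, $\mathcal{R}^*$ is certainly realizable (restrict any realizing drawing of $\mathcal{R}$ to the skeleton vertices), so the LP is feasible.

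Next I would scale $\Gamma^*$ by the integer factor $n$. Because the LP enforces $x_a \ge 1$, every edge of $\Gamma^*$ has integer length at least $1$, hence length at least $n$ after scaling; the drawing remains on the integer grid (integer scaling preserves all slopes $0,\infty,\pm 1$) while its area becomes $n^2 \cdot A(\delta)$. Finally I would re-inflate: each horizontal or vertical edge of $G^*$ represents a maximal straight path of $G$ carrying some number $k_e$ of internal vertices, and since these internal vertices together with the two endpoints are distinct vertices of $G$ we have $k_e + 1 < n$, which is at most the scaled length of that edge. I can therefore place the $k_e$ internal vertices at distinct integer points along the corresponding segment, each forming the required angle of $\pi$, while the diagonals of $G^*$ correspond to single diagonal edges of $G$ and need no subdivision. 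The resulting drawing $\Gamma$ realizes $\mathcal{R}$, lies on the integer grid, and has the same bounding box as the scaled skeleton, giving area $O(n^2 \cdot A(\delta))$.

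I expect the only genuinely subtle point to be the $n$-independence of $A(\delta)$, which hinges on the LP living on a graph of size $O(\delta)$; everything else is routine bookkeeping, namely convexity inheritance, grid preservation under integer scaling, and the length accounting $k_e + 1 < n$. One should additionally verify that re-inflation preserves planarity, which holds because the reinserted vertices lie on straight segments of an already planar drawing and so introduce no crossings.
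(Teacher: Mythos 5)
Your high-level strategy (draw the skeleton small, scale it by $n$, then reinsert the missing vertices) is the same as the paper's, but it rests on a misreading of what the skeleton $G^*$ actually is, and this creates two genuine gaps. The skeleton is \emph{not} $G$ with degree-$2$ vertices of angle $\pi$ smoothed away: an internal vertex of a horizontal path between two diagonal-incident vertices may have further vertical edges attached (a T-junction), and, more importantly, every part of $G$ that does \emph{not} lie on a straight path between two diagonal-incident vertices is discarded entirely --- for instance, a rectilinear grid region inside a convex boundary, or, in the extreme case $\delta=0$, all of $G$ (then $G^*$ is empty, and your construction produces nothing, while the lemma still demands a drawing of area $O(n^2)$). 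This breaks your first step: convexity is \emph{not} inherited by $\mathcal{R}^*$. When an edge incident to a skeleton vertex is deleted (because its maximal straight path dead-ends at a non-skeleton vertex), the two convex angles on its sides merge; e.g., at an interior vertex with two diagonals and one horizontal edge, two angles of $\frac{3}{4}\pi$ merge into a reflex angle of $\frac{3}{2}\pi$. So $\mathcal{R}^*$ can have reflex angles in internal faces and you cannot invoke Lemma~\ref{thm:convex} (whose flow characterization is only valid in the convex case) to produce $\Gamma^*$ or to bound $A(\delta)$.

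The second and more serious gap is the re-inflation: your placement scheme only handles skeleton vertices and internal vertices of skeleton paths, so all the discarded rectilinear structure of $G$ has nowhere to go, and reinserting it is exactly the hard part of the lemma. The paper avoids both problems as follows: it starts from an arbitrary realization $\Gamma \in \mathcal{R}$ (which exists because \OctC{} presupposes realizability), obtains $\Gamma^*$ by deleting/smoothing inside $\Gamma$ (so no LP on $\mathcal{R}^*$ is needed, only the assertion that a drawing of a representation with $O(\delta)$ vertices can be chosen with area depending on $\delta$ alone), scales $\Gamma^*$ so every edge has height or width at least $n$, and then reinserts \emph{every} vertex of $G$ by mapping each horizontal and each vertical line of $\Gamma$ that contains a vertex to a line in the scaled $\Gamma^*$, preserving the order of the lines and their crossing pattern with the diagonals, and placing each vertex of $G$ at the intersection of its two images. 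That global line-mapping is the ingredient your argument is missing; without it (or a substitute handling the diagonal-free parts of $G$), the proof does not go through.
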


\begin{proof}
Let $\Gamma$ be a drawing realizing $\mathcal{R}$. We remove edges and vertices and smooth vertices from $\Gamma$ to obtain a drawing of a graph isomorphic to $G^*$. By construction of $\mathcal{R}^*$, this is a drawing of $G^*$ realizing $\mathcal{R}^*$. We denote by $\Gamma^*$ such drawing.  Since $G^*$ has $O(\delta)$ vertices, we can also assume that $\Gamma^*$ has an area that only depends on $\delta$. Let $h^*(\delta)$ and $w^*(\delta)$ the height and width of the smallest axis-parallel rectangle containing $\Gamma^*$. We scale $\Gamma^*$ so that the minimum height or width of any of its edges is at least $n$. The smallest axis-parallel rectangle containing the drawing $\Gamma^*$, computed as described above, has now width $O(w^*(\delta) \cdot n)$ and height $O(h^*(\delta) \cdot n)$.  We can now construct a realization $\tilde{\Gamma}$ of $\mathcal{R}$ different from $\Gamma$, using $\Gamma$ to enrich $\Gamma^*$ and then we prove that its area is $O(n^2 \cdot A(\delta))$.

We first consider every horizontal line of $\Gamma$ containing at least a vertex and we map it to an horizontal line of $\Gamma^*$ so that: The set of crossings of the diagonal edges the two lines have in the two drawings is the same (observe that every diagonal edge in $\Gamma$ corresponds exactly to a diagonal edge of $\Gamma^*$); the order of the horizontal lines is the same in the two drawings. We do the same for the vertical lines. We now consider $\Gamma^*$ and the horizontal and vertical lines described so far, and we place every vertex of $G$ in the intersection of the corresponding vertical and horizontal line (either subdividing an edge or placing it incident to no edges). See, for example, vertex $v$ in Fig. \ref{fig:octy-xp}: The lines $l_1$ and $l_2$ are mapped to the lines $l_1^*$ and $l_2^*$, respectively, and $v$ is placed according to the vertical line $l_1^*$ and the horizontal line $l_2^*$. This construction is always possible, as every edge of $\Gamma^*$ has height or width (or both, in case of diagonal edges) equal or higher than $n$. Finally, after all the vertices are placed, we add the remaining edges. Let $\tilde{\Gamma}$ be the obtained drawing, which by construction is a realization of $\mathcal{R}$. The area of $\tilde{\Gamma}$ is  $O(w^*(\delta) \cdot n) \times O(w^*(\delta) \cdot n)=O(n^2 \cdot A(\delta))$ with $A(\delta)=h^*(\delta \cdot w^*(\delta))$.$\hfill\qed$
\end{proof}

Lemma \ref{le:compressed-drawing} implies in particular that we can assume, without loss of generality, that there exists a realization of $\mathcal{R}$ where every diagonal has height (and width) at most $A(\delta)$. We can consequently consider all the possible combinations of height (and width) of the diagonals and, for each combination use the flow algorithm described in Section \ref{se:flow} to minimize the area, once the flow along the diagonal is fixed. Among all the drawings computed with this procedure, we finally select the one having minimum area. As this procedure is exhaustive, this is the drawing realizing $\mathcal{R}$ having minimum area. Hence, we have the following:

\begin{theorem}
\label{th:xp}
\OctC is XP when there is no reflex angle in any internal face and no inflex angle in the outer face in $\mathcal{R}$. 
\end{theorem}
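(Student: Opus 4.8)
The plan is to combine the existence of a ``small'' realization from Lemma~\ref{le:compressed-drawing} with the observation that, once the lengths of all diagonals are fixed, the two flow networks of Section~\ref{se:flow} decouple, so that the minimum area can be found by two independent linear programs. Throughout I work under the stated hypothesis (internal faces convex, outer face inflex-free), so that realizability and coordinate assignment are governed by the flow machinery of Lemma~\ref{thm:convex}, to which I only add an optimization objective.

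First I would bound the search space. By Lemma~\ref{le:compressed-drawing} there is a realization of $\mathcal{R}$ of area $O(n^2 \cdot A(\delta))$, so the minimum area $A^*$ satisfies $A^* = O(n^2 \cdot A(\delta))$. Since a diagonal of length $L$ forces the bounding box to have both width and height at least $L$, every diagonal in a minimum-area drawing has integer length at most $\sqrt{A^*}$, hence at most $D := O(n \sqrt{A(\delta)})$. Consequently there is a minimum-area realization whose vector of diagonal lengths lies in $\{1,\dots,D\}^\delta$, a set of size $D^\delta = n^{O(\delta)}\cdot g(\delta)$ for a computable function $g$. I would enumerate all such vectors.

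The main step is to show that, for a fixed vector of diagonal lengths, a minimum-area realization consistent with it is computable in polynomial time. Recall from Section~\ref{se:flow} that $G^\rightarrow$ and $G^\downarrow$ interact only through the diagonal arcs, which are constrained to carry equal flow in both networks. Once every diagonal flow is fixed to the chosen integer value, this coupling disappears and the two flow LPs become independent: the $x$-coordinates of a realization are then governed by one of the two networks and the $y$-coordinates by the other, the bounding-box width being a linear function of the variables of the former and the height a linear function of the variables of the latter. I would therefore add to each LP the objective of minimizing width, respectively height. Because the networks carry integer lower bounds and the diagonal flows are fixed to integers, the flow polytopes are integral and the optima are attained at integer points, giving a grid drawing directly, without a rescaling step that would otherwise disturb the fixed diagonals. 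Finally, since width depends only on the $x$-solution and height only on the $y$-solution over now-independent feasible regions, we get $\min \text{area} = (\min \text{width})\cdot(\min \text{height})$; solving the two LPs separately thus yields the minimum area over all realizations with the given diagonal lengths, and an infeasible vector is simply detected and discarded.

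At the end I would return the drawing of minimum area over all enumerated vectors. This is correct because every drawing produced is a valid realization, so the reported area is at least $A^*$, while the diagonal vector of some minimum-area realization lies in $\{1,\dots,D\}^\delta$ and for that vector the procedure returns area exactly $A^*$. The running time is the number $D^\delta = n^{O(\delta)}\cdot g(\delta)$ of enumerated vectors times the polynomial cost of the two linear programs, i.e.\ $n^{O(\delta)}$ up to a factor depending only on $\delta$, which is an XP bound in $\delta$. The step I expect to be the most delicate is justifying that fixing the diagonals genuinely decouples the width- and height-optimization, so that minimizing the two dimensions independently truly minimizes the non-linear area objective; this rests precisely on the fact that the diagonals are the only edges appearing simultaneously in $G^\rightarrow$ and $G^\downarrow$.
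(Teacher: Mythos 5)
Your proposal follows essentially the same route as the paper: use Lemma~\ref{le:compressed-drawing} to bound the lengths of the diagonals, enumerate all possible integer length vectors for the $\delta$ diagonals (an $n^{O(\delta)}\cdot g(\delta)$-size search space), and for each fixed vector run the flow/linear-programming machinery of Section~\ref{se:flow} to obtain a minimum-area realization, returning the overall minimum. You additionally spell out a step the paper leaves implicit but relies on, namely that fixing the diagonal flows decouples $G^\rightarrow$ and $G^\downarrow$, so width and height can be minimized by independent integral LPs and the minimum area is their product.
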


\begin{remark}
    Lemma~\ref{thm:convex} implies that $A(\delta)$ can be encoded in $\mathcal{O}(\delta)$ bits, i.e., it is at most exponential in $\delta$. Also observe that the conditions of Theorem~\ref{th:xp} imply $\omega \in \{3,4\}$, that is, our XP-algorithm solves the problem in general when $\omega=3$.
\end{remark}

\section{Open Problems}
\label{sec:Open}


For some $\kappa < 8$, it may be that \OctR becomes polynomial time solvable. In addition, our FPT algorithm may be parameterizable for a suitable definition of the kitty corners concept that is used for orthogonal graph drawings.
Finally, it remains open whether there is a constant factor approximation for \OctC or if the problem is inapproximable.

\bibliography{references}

\newpage
\appendix

\newpage 

\section{Omitted Details from Section~\ref{sec:nphardnessliterature}}
\label{app:hard}

\begin{figure}[h!]
    \centering      \includegraphics[width=0.9\textwidth,page=6]{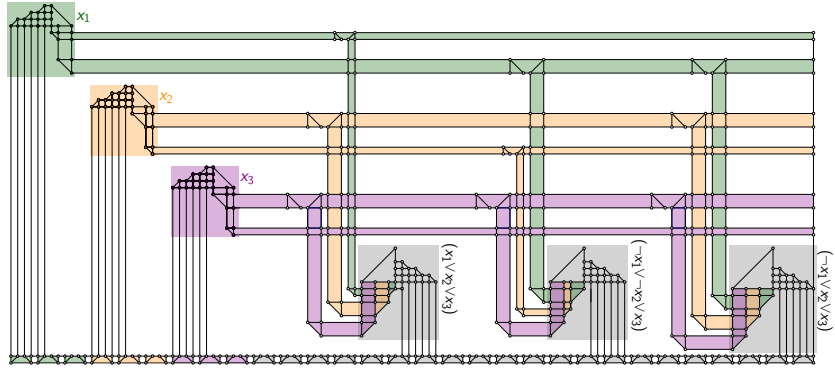}
    \caption{The construction from Fig.~\ref{fig:hard} without our adjustments from Section~\ref{sec:inapprox}.}
    \label{fig:hard-clean}
\end{figure}

\begin{figure}[b]
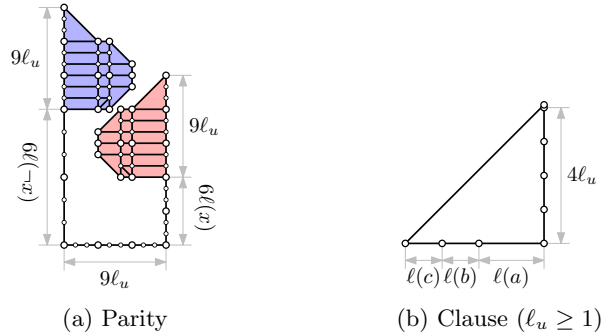

    \centering
    \begin{subfigure}[b]{0.24\textwidth}
        \centering
        \includegraphics[page=13,scale=0.8]{octilinear-np.pdf}
        \caption{Parity}
        \label{fig:beyond-ortho-np:parityOctilinear}
    \end{subfigure}
    \hfil
    \begin{subfigure}[b]{0.24\textwidth}
        \centering
        \includegraphics[page=10,scale=0.8]{octilinear-np.pdf}
        \caption{Clause ($\ell_u\ge1$)}
        \label{fig:beyond-ortho-np:variableUnit}
    \end{subfigure}
      \caption{Additional gadgets used in the reduction from \textsc{$3$-SAT} to \textsc{Octilinear Realizability~\cite{DBLP:journals/algorithmica/BekosFK19}} for the general case $\ell_u\ge1$.
    }
    \label{fig:beyond-ortho-np:routing}
\end{figure}

\textbf{General case $\mathbf{\ell_u\ge1}$.} Observe that the discussion above does not immediately generalize to arbitrary values of $\ell_u$. For instance, if $\ell_u=2$, we can set $\ell(x)=\ell(\neg x)=3$ for each variable $x$. Then, in the clause gadget for $(a\lor b\lor c)$, we have $\ell(a)+\ell(b)+\ell(c)=9>6=3\ell(u)$. Hence, in~\cite{DBLP:journals/algorithmica/BekosFK19} the authors introduce a fourth type of gadget.  The construction of the \emph{parity gadget} is more involved see Fig.~\ref{fig:beyond-ortho-np:parityOctilinear}. For each variable $x$ there is a parity gadget. The boundary of a parity gadget is defined by edges of lengths $\ell_u$, $\ell(x)$ and $\ell(\neg x)$ and produces a crossing between the two \emph{blocks} (highlighted blue and red in Fig.~\ref{fig:beyond-ortho-np:parityOctilinear}) unless $\ell(x),\ell(\neg x) \in \ ]0,1.084\ell_u[ \ \cup\  ]1.916 \ell_u,3\ell_u[$ (where $]a,b[$ denotes the open interval between $a$ and $b$); for a proof, see~\cite[Theorem 5]{DBLP:journals/algorithmica/BekosFK19}. The parity gadget makes sure the values of $\ell(x)$ and $\ell(\neg x)$ are in the precise range described above. Since edges corresponding to false literals can be slightly longer than $\ell_u$, the sum of the edge lengths of the three literals must be more than $4\ell_u$ which can even be achieved if only one literal is true.  See Fig.\ref{fig:beyond-ortho-np:variableUnit}. See Fig. \ref{fig:parameterized-np-hardness-large-a} for an example of the reduction described in ~\cite{DBLP:journals/algorithmica/BekosFK19}. See Fig.\ref{fig:octilinear-np-app} for an enlarged version of the figure, sliced into the separate gadgets. 

\smallskip \noindent
\textbf{Details about the construction for $\ell_u\ge 1$} Refer to Fig.~\ref{fig:parameterized-np-hardness-large-a} and, for more enlarged figures, also to Appendix \ref{app:paranp}. There is a chain of copy gadgets on the bottom boundary of the drawing, so that each of the gadgets has two copies of the copied edge length at its top side. We will interpret the copied edge length as the unit length of the drawing. The length of this chain will be implicitly defined in the following discussion. Then, we put the variable gadgets at the left side of the construction. Namely, we place the variable gadget for variable $x_i$ above and to the left of the corresponding gadget for variable $x_{i+1}$. We connect each variable gadget to two copy gadgets on the bottom side of the drawing, that is, one copy remains unused.  We then reroute the two literals of variable $x_i$ so that a sequence of propagation and copy gadgets can propagate it rightwards to its usages within the parity and clause gadgets. Note that these literal paths of variable $x_i$ occur above the variable gadget of variable $x_{i+1}$.  Next, we place the parity gadgets of all variables on the top side of the drawing, with the two blocks of the gadget being part of the outer face, above the variable gadget of $x_1$ so that the parity gadget of variable $x_i$ appears to the left of the parity gadget of variable $x_{i+1}$. We connect each parity gadget to the two corresponding literal paths via three copy gadget on each of the paths. In addition, the unit length is obtained from fourteen copy gadgets that copy the unit length on the bottom side of the drawing; again one copy each remains unused. Finally, it remains to position the clause gadgets to the right of the parity gadget of variable $x_n$. Again, we place the gadget for clause $c_i$ to the left of the gadget for clause $c_{i+1}$ so that the diagonal segment is on the outer face. Moreover, we connect the clause gadget of clause $c_i$ to the three contained literal paths via a copy gadget on each of the paths (note that a copy remains unused). The unit length is provided from two copy gadgets on the bottom side of the drawing where one copy remains unused.


\section{Omitted Details from Section~\ref{sec:inapprox}}
\label{app:inapprox}

\inapprox*

\begin{proof}
We reduce from 3-SAT and closely follow the construction by Bekos et al.~\cite{DBLP:journals/algorithmica/BekosFK19} with the variant where we assume that the unit length $\ell_u$ is equal to $1$. This property is maintained by making the size of the output drawing dependent on the unit length $\ell_u$. Namely, at the left boundary of the outer face, there exist $3n+5$ copy gadgets while at the bottom boundary of the outer face, we have $3n+7m$ copy gadgets where $n$ and $m$ are the number of variables and clauses of the $3$-SAT formula, respectively; see Fig.~\ref{fig:hard} for an example, where the copy gadgets added w.r.t. the reduction of~\cite{DBLP:journals/algorithmica/BekosFK19} (Fig.~\ref{fig:hard}) are on the left and shaded red. 

There are some reflex corners in the drawing. However, we can add another horizontal or vertical segment by shooting a ray towards the face for which the vertex is reflex and subdividing the first segment crossed by the ray; see blue dashed segments and blue square-shaped vertices in Fig.~\ref{fig:hard}. The obtained representation in fact has only convex internal faces and $\omega=4$ as the unbounded outer face has four reflex corners. Since we inserted all the required gadgets for the NP-hardness reduction, we can conclude that there is no efficient algorithm computing a drawing with $\ell_u=1$ for the obtained representation unless P=NP. 
It remains to discuss the area. The number of copies is chosen, so that all free edge lengths of the copy gadgets and the lengths of the edges connecting two copies of those gadgets at the top and bottom side can be chosen to be $1$. Thus, each copy gadget is $2\ell_u+1$ wide, in addition, between two copies there is an edge of length $1$. Thus, depending on $\ell_u$, the width and height of the drawing are $h(\ell_u) = (2\cdot \ell_u+2)\cdot(3n+5)+\ell_u$ and $w(\ell_u)=(2\cdot \ell_u+2)\cdot(3n+7m)+\ell_u$, respectively. If a drawing with $\ell_u=1$ exists, i.e., if the $3$-SAT instance is satisfiable, the smallest drawing will have width and height $h(1) = 12n+21$ and $w(1)=12n+28m+1$, respectively. On the other hand, there is always a drawing with $\ell_u=2$ as this allows to set $\ell(x)=\ell(\neg x)=3$ which trivially fulfills the constraints of each clause gadget. Hence, if the $3$-SAT instance is unsatisfiable, we obtain 
a smallest drawing of width and height $h(2) = 18n+32$ and $w(2)=18n+42m+2$, respectively. Assume now for a contradiction that there was a $9/4$-approximation for \OctC problem. Given a positive instance of $3$-SAT, this approximation would yield a drawing  $\Gamma$ with area at most \begin{equation}
\frac{9}{4} h(1) \cdot w(1) = \frac{3}{2} h(1) \cdot \frac{3}{2}w(1) = (18n+31.5) \cdot (18n+42m+1.5) < h(2) \cdot w(2),    \notag
\end{equation}
i.e., a drawing that is \emph{smaller} than any drawing where $\ell_u=2$. Thus, in $\Gamma$, we have that $\ell_u=1$, i.e.,  we can actually decide in polynomial time whether there is a drawing with $\ell_u=1$, a contradiction unless P=NP.$\hfill\qed$
\end{proof}

\begin{figure}[p]
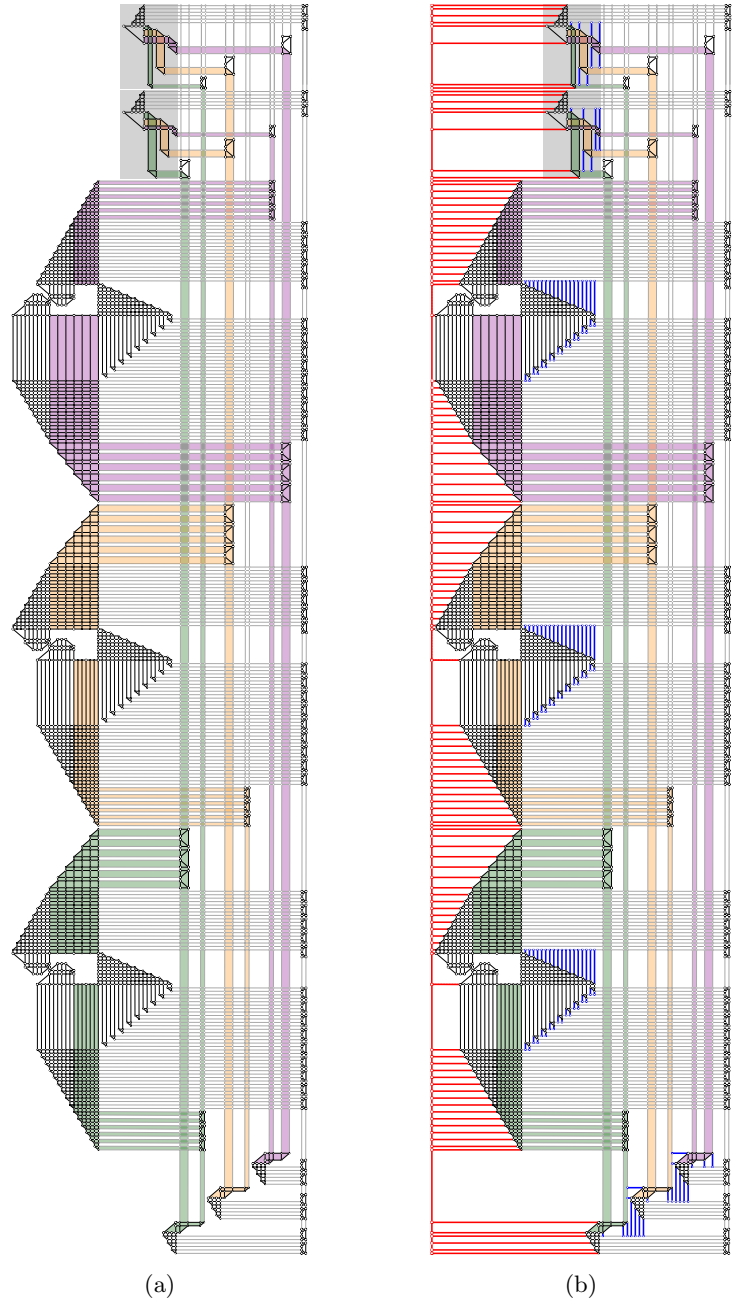

    \centering
     \begin{subfigure}{0.45\textwidth}
         \centering         \includegraphics[height=0.863\textheight,page=15]{octilinear-np.pdf}
        \caption{}
        \label{fig:parameterized-np-hardness-large-a}
    \end{subfigure}
    \begin{subfigure}{0.45\textwidth}
         \centering         \includegraphics[height=0.863\textheight,page=14]{octilinear-np.pdf}
        \caption{}
        \label{fig:parameterized-np-hardness-large-b}
    \end{subfigure}
    \caption{(a) Example of the reduction descibed in~\cite{DBLP:journals/algorithmica/BekosFK19} with formula $(x_1 \lor x_2 \lor  x_3) \land (\neg x_1 \lor x_2 \lor \neg x_3)$ in the generalized case where $\ell_u\ge 1$. The is assignment $x_1= x_2=\top, x_3=\bot$. (b) Modified version of the construction as described in Section \ref{sec:paranp}.}
     \label{fig:parameterized-np-hardness-large}
\end{figure}

\section{Omitted Details from Section~\ref{sec:paranp}}
\label{app:paranp}
\paraNP*
\begin{proof} We consider the NP-hardness construction by Bekos et al.~\cite{DBLP:journals/algorithmica/BekosFK19} for the more general case where $\ell_u\ge 1$. Refer to Fig.~\ref{fig:parameterized-np-hardness-large}. In particular, Fig.~\ref{fig:parameterized-np-hardness-large-a} is the construction of~\cite{DBLP:journals/algorithmica/BekosFK19}, while Fig.~\ref{fig:parameterized-np-hardness-large-b} is the figure depicting the construction we describe here. See also Figure~\ref{fig:octilinear-np-app} for enlarged figures. The thicker edges and the square vertices are the ones added in the procedure described as follow (in particular, blue edges are added in Part 1 and red edges are added in Part 2).

\smallskip \noindent
\textbf{Part 1: $\varphi=1$.} We have to insert additional segments to deal with reflex corners introduced by the variable gadgets for variables $x_i$ with $i \geq 2$ and the rerouting gadgets to obtain $\varphi=1$. We do so as follows. First, observe that each variable gadget for variable $x_i$ (with $i \geq 2$) adds two reflex corners. We subdivide the first edge of the literal path of literal $\neg x_{i-1}$ above twice and connect the variable gadget's reflex corners to the created dummy vertices by vertical segments. For each reflex corner introduced by rerouting gadgets, we observe that there is a vertical edge directly to the left or to the right. We subdivide this horizontally visible edge and connect the reflex corner to the newly created dummy vertex by a horizontal segment.
Observe that the only remaining reflex corners belong to parity gadgets. As discussed in our recap of the NP-hardness construction of Bekos et al.~\cite{DBLP:journals/algorithmica/BekosFK19}, the reflex corners of all parity gadgets belong to the outer face, thus the outer face is the only face that is bounded by a non-convex polygon. 

\smallskip \noindent
\textbf{Part 2: $\kappa=8$.} It remains to augment our construction for $\varphi=1$ to obtain the result for $\kappa=8$. In the process, $\varphi$ will become non-constant. To do so, we add another dummy vertex above each reflex corner on the outer face except for those being part of a block of a parity gadget. We then connect each reflex corner with the corresponding dummy vertex with a vertical segment. Finally, we connect all dummy vertices with a horizontal path. It is straight-forward to see that the only remaining reflex corners occur in parity gadgets and in the outer face. Moreover, each parity gadget is part of a separate face. Thus, we conclude that each face contains at most $\kappa=8$ reflex corners, which concludes the proof.$\hfill\qed$
\end{proof}


\begin{figure}[h]
    \centering
    \begin{subfigure}[b]{\textwidth}
     \centering
        \includegraphics[page=18,scale=0.65]{octilinear-np.pdf}
        \caption{Variable gadgets and parity gadget of $x_1$.}
        \label{fig:octilinear-np-app-a}
    \end{subfigure}
    \hfil
    \begin{subfigure}[b]{\textwidth}
     \centering
        \includegraphics[page=17,scale=0.65]{octilinear-np.pdf}
        \caption{Variable gadgets and parity gadget of $x_1$.}
        \label{fig:octilinear-np-app-b}
    \end{subfigure}   
\end{figure}
\begin{figure}[p]
\ContinuedFloat
    \centering
    \begin{subfigure}[b]{\textwidth}
     \centering
        \includegraphics[page=20,scale=0.65]{octilinear-np.pdf}
        \caption{Parity gadget of $x_2$.}
        \label{fig:octilinear-np-app-c}
    \end{subfigure}
    \hfil
    \begin{subfigure}[b]{\textwidth}
     \centering
        \includegraphics[page=19,scale=0.65]{octilinear-np.pdf}
        \caption{Parity gadget of $x_2$.}
        \label{fig:octilinear-np-app-d}
    \end{subfigure}   
\end{figure}
\begin{figure}[p]
\ContinuedFloat
    \centering
    \begin{subfigure}[b]{\textwidth}
     \centering
        \includegraphics[page=22,scale=0.65]{octilinear-np.pdf}
        \caption{Parity gadget of $x_3$.}
        \label{fig:octilinear-np-app-e}
    \end{subfigure}
    \hfil
    \begin{subfigure}[b]{\textwidth}
     \centering
        \includegraphics[page=21,scale=0.65]{octilinear-np.pdf}
        \caption{Parity gadget of $x_3$.}
        \label{fig:octilinear-np-app-f}
    \end{subfigure}   
\end{figure}
\begin{figure}[p]
\ContinuedFloat
    \centering
    \begin{subfigure}[b]{\textwidth}
     \centering
        \includegraphics[page=24,scale=0.65]{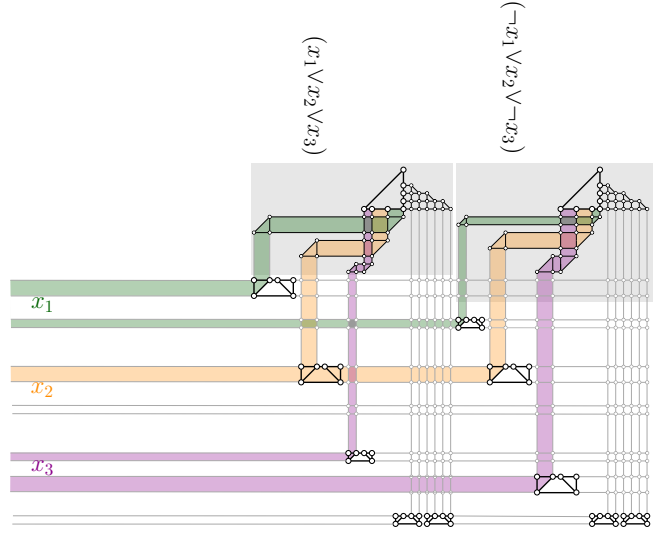}
        \caption{Clause gadgets.}
        \label{fig:octilinear-np-app-g}
    \end{subfigure}
    \hfil
    \begin{subfigure}[b]{\textwidth}
     \centering
        \includegraphics[page=23,scale=0.65]{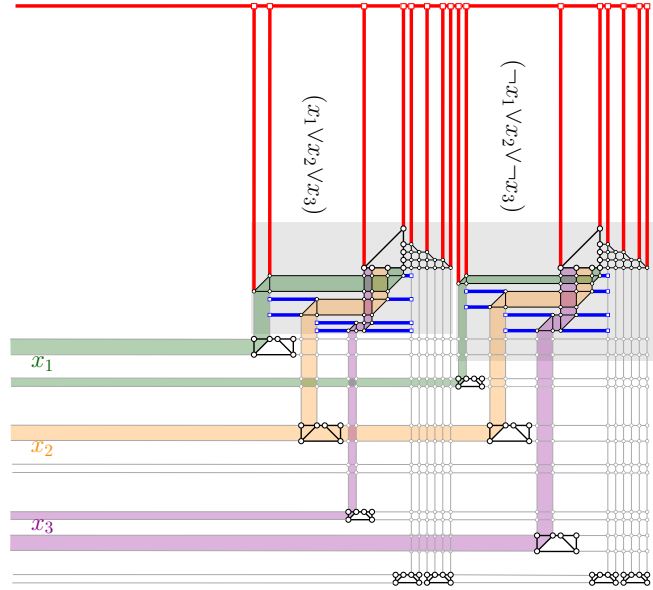}
        \caption{Clause gadgets.}
    \end{subfigure}   
    \caption{Example of the reduction with formula $(x_1 \lor x_2 \lor  x_3) \land (\neg x_1 \lor x_2 \lor \neg x_3)$ and assignment $x_1= x_2=\top, x_3=\bot$. (a), (c), (e), and (g) are the graph used to proof indescibed in~\cite{DBLP:journals/algorithmica/BekosFK19}, while (b), (d), (f), and (h) are the graph constructed for the proof of Theorem~\ref{thm:paraNP}.}
    \label{fig:octilinear-np-app}
\end{figure}

\section{Omitted Details from Section~\ref{sec:fpt}}
\label{app:fpt}

\fewext*
\begin{proof}
Every vertex forming a reflex angle can be aligned with an edge or a vertex incident to the same face. By Lemma~\ref{le:facesbounded} it follows that every such vertex can be aligned to $O(\omega)$ elements, which can be either vertices or edges. After choosing the alignments, we also have to order the subdivision vertices that occur on the same edge. For each of the at most $\omega$ subdivided edges, there are at most $\omega$ subdivision vertices added. This gives rise to at most $\omega! =2^{O(\omega^2)}$ permutations. The lemma follows.$\hfill\qed$
\end{proof}

\extrealiz*
\begin{proof}
The $\Leftarrow$ direction is immediate, as given a realization of $\mathcal{R}_+$ we can remove edges and smooth vertices. For the other direction, suppose that we have a realization $\Gamma\in \mathcal{R}'$. Every vertex at a reflex angle in an internal face is horizontally aligned with some edges incident to the same face. Concerning the outer face, if we add a drawing $\Gamma_J$ of $J$ so that $\Gamma$ is in the internal face of $\Gamma_J$ and such that all the internal angles are $\frac{\pi}{2}$, we have that every reflex angle in the external cycle of $\Gamma$ is horizontally aligned either to another edge of the same cycle or to an edge of $J$. We can add such horizontal edges, eventually subdividing an edge, and then we obtain a realization of an extension of $\mathcal{R}''$.$\hfill\qed$
\end{proof}

\end{document}